\documentclass[1p,times]{elsarticle}

\usepackage{amsmath,amssymb}
\usepackage{amsthm}
\usepackage{enumitem}
\usepackage{float}

\usepackage{graphicx}
\graphicspath{{./}}
\usepackage{changepage}
\usepackage[table]{xcolor}
\usepackage{array}

\usepackage{algorithmic}
\usepackage{mdframed}

\newcounter{spideralg}
\newcommand{\algcaption}[1]{\refstepcounter{spideralg}%
  \par\noindent\textbf{Algorithm \thespideralg.} #1\par\vspace{3pt}}

\usepackage{url}
\usepackage{hyperref}
\hypersetup{hidelinks}

\theoremstyle{plain}
\newtheorem{theorem}{Theorem}
\newtheorem{proposition}{Proposition}

\theoremstyle{definition}
\newtheorem{definition}{Definition}
\newtheorem{condition}{Condition}
\newtheorem{assumption}{Assumption}
\newtheorem{problem}{Problem}

\newcommand{\C}{\mathbb{C}}

\newcommand{\Herm}{{}^{H}}          
\newcommand{\T}{{}^{\top}}

\newcommand{\diag}{\operatorname{diag}}

\journal{NeuroImage}

\begin{document}

\begin{frontmatter}

\title{SPIDER: Recovering brain-wide directed information flow from incomplete and asynchronous neural recordings}

\author[tsinghua]{Yisi S. Zhang\corref{cor}}
\ead{zhangyisi@tsinghua.edu.cn}

\author[ufrn]{Daniel Y. Takahashi\corref{cor}}
\ead{takahashiyd@gmail.com}

\cortext[cor]{Corresponding authors.}

\address[tsinghua]{Department of Psychological and Cognitive Sciences, Tsinghua University, Beijing, China}
\address[ufrn]{Brain Institute, Federal University of Rio Grande do Norte, Natal, Brazil}

\begin{abstract}
Mapping the directed flow of information between brain regions---their effective connectivity---is central to understanding brain function, yet large-scale recordings sample only a fraction of the brain at a time: sessions, animals, and laboratories cover different, partially overlapping regions, usually without a shared temporal reference. Established directed-connectivity methods (Granger causality, dynamic causal modeling, partial directed coherence, PDC) require all regions to be recorded simultaneously and with a common clock. We introduce SPIDER (Stitched Power-spectra for Inferring Directed information flow from incomplete and asynchronous Experimental Recordings), a non-parametric, frequency-domain framework that recovers directed information flow from such incomplete, asynchronous recordings: it stitches local power-spectral estimates from overlapping channel subsets into a global spectral matrix and obtains frequency-resolved directed interactions by canonical spectral factorization and PDC, without temporal alignment, while nuclear-norm completion fills in never-co-observed region pairs. With consistency guarantees, we validate SPIDER on simulations, two-photon calcium imaging, and the International Brain Laboratory Neuropixels dataset, recovering directed flow among 50 areas from 43 sessions in 12 laboratories---never recorded together---with the strongest interactions matching Allen atlas projections (AUC $\approx 0.9$). Beyond validation, SPIDER reveals what no single recording can: brain-wide spontaneous flow is largely recurrent, but in the theta band it forms a significant feedforward hierarchy with the hippocampal formation at its source. Applied to resting human intracranial EEG (43 patients, partially overlapping coverage), it recovers the same theta-band hierarchy across species and modality. SPIDER makes whole-brain effective-connectivity analysis tractable for multi-session, multi-subject datasets previously incompatible with directed-flow inference.
\end{abstract}

\begin{keyword}
Effective connectivity \sep Directed information flow \sep Partial directed coherence \sep Spectral analysis \sep Neuropixels \sep Intracranial EEG \sep Data integration
\end{keyword}

\end{frontmatter}


\section{Introduction}
A central goal of systems and cognitive neuroscience is to map how information flows between brain regions, and crucially in which direction~\cite{friston2003dynamic,seth2015granger,bastos2016tutorial}. This directed, or \emph{effective}, connectivity is increasingly studied with large-scale recordings that, by physical necessity, sample only a fraction of the brain at any one time. High-density electrode arrays such as Neuropixels probes reach only a handful of regions per insertion~\cite{international2025reproducibility, international2025brain}, and different sessions or animals cover different, partially overlapping sets of regions; sequential two-photon calcium imaging likewise records overlapping neuronal populations across separate fields of view~\cite{microns2025functional}. The full network is never observed simultaneously, and during spontaneous activity no external trigger is available to align recordings in time. Recovering the directed interactions that govern such systems from these fragmented, asynchronous observations is the problem we address.

Although our motivation and validation are firmly in neuroscience, the underlying difficulty is general: across many complex systems, directed information flow among a high-dimensional set of components must be inferred from data, a task central to causal discovery~\cite{granger1969investigating}, system identification~\cite{ljung1998system}, and representation learning~\cite{scholkopf2021toward}, yet physical, temporal, or bandwidth constraints---limited channels, experimental time, or space for sensors---routinely make it impossible to observe every component~\cite{soudry2015efficient}. The brain is the setting where these constraints are most acute and where the payoff of solving them is largest, and it is the setting we develop the method for throughout.

This incompleteness poses a fundamental challenge for causal inference. When only a subset of variables is observed, unobserved components can act as latent confounders, inducing spurious dependencies and obscuring the true interaction structure---a well-known problem in graphical models and causal discovery~\cite{hsiao1982autoregressive,eichler2012causal}. For example, in a three-node network, observing only two nodes may lead to the erroneous conclusion that they directly interact when, in fact, their apparent dependence is mediated by an unobserved third node~\cite{das2020systematic} (Fig~\ref{fig1}B). While alignment to a common external event or experimental trigger can sometimes mitigate this issue~\cite{qiao2020causal}, many settings of interest, such as spontaneous neural activity~\cite{stringer2019spontaneous} and naturalistic animal behavior and state~\cite{calhoun2019unsupervised,zhang2022active}, lack any reliable synchronization signal. In such cases, the global joint distribution of all variables is never directly sampled, and cross-dependencies between non-co-observed subsets remain undefined.

A natural strategy is to pool information across multiple partial observations so that, collectively, most variables are observed, albeit never all at once (Fig~\ref{fig1}A: the problem). Current strategies to stitch these fragmented observations typically fall into two categories, both with significant limitations. Parametric approaches~\cite{turaga2013inferring,soudry2015efficient,kang2024integration} rely on rigid modeling assumptions (e.g., generalized linear models or vector autoregressive (VAR) models) and are not adequate for extracting frequency-domain information, which is critical for several applications such as neuroscience, where oscillatory interactions in distinct frequency bands carry functionally distinct signals~\cite{buzsaki2006rhythms}. Covariance-based aggregation methods~\cite{bishop2014deterministic,vinci2019graph,chang2023nonparanormal} remain restricted to undirected dependencies and cannot recover directional interactions. As a result, without restrictive conditions, directed information flow remains difficult to infer for general signals in high-dimensional systems with incomplete and asynchronous observations.

In this work, we introduce SPIDER, a non-parametric framework that overcomes these limitations by operating in the frequency domain. Our key idea is to first stitch local power spectral density (PSD) estimates, which are computed independently on partially overlapping subsets of variables, into a global spectral representation. Second, we use the fact that minimum-phase wide-sense stationary processes are completely characterized by their spectral representation, enabling directed inference without requiring temporal alignment~\cite{wilson1978convergence}. This strategy enables principled handling of incompleteness and high dimensionality. Leveraging non-parametric partial directed coherence (PDC)~\cite{baccala2001partial}, we infer directed information flow from the stitched spectral density, avoiding the need for explicit time-domain models, latent state assumptions, or synchronized recordings. Together, these components extend directed information flow inference to a broad class of settings where full observability, low dimensionality, and temporal alignment cannot be assumed. From the application perspective, frequency-domain measures are particularly well suited to neuroscience and biomedical signal analysis, and to any situation in which understanding information flow at different temporal scales matters: band-specific effective connectivity (for example, theta- versus gamma-range interactions) is a natural and widely used readout in the analysis of neural recordings.

\subsection{Contributions}
This work makes two main contributions, both aimed at extending directed effective-connectivity analysis to the multi-session, multi-animal recordings that dominate modern neuroscience.

\textbf{(1) A non-parametric estimator of whole-network, frequency-domain directed connectivity under asynchronous block recordings.} We introduce a two-stage procedure that combines spectral stitching and canonical factorization to recover directed information flow when the full set of variables (e.g., brain areas) is never observed simultaneously, provided every pair is co-observed in at least one recording block. The estimator is non-parametric, applies to any wide-sense stationary process with bounded spectral density (including infinite-order VAR processes, autoregressive moving average (ARMA) processes, and multivariate Hawkes point processes such as those describing spike trains), and does not require temporal alignment across blocks. Theorem~\ref{thm:complete_short} establishes uniform consistency, and we identify the high-dimensional spectral consistency result of~\cite{zhang2025spectral} as a concrete sufficient condition under which the estimator applies in regimes where the number of variables $K$ greatly exceeds the per-block sample size $T_u$, with $K$ allowed to grow nearly exponentially in $T_u$.

\textbf{(2) Extension to settings with unobserved variable pairs.} When some pairs are never co-observed---the typical situation when pooling across electrode placements that vary from animal to animal---we incorporate a structured matrix completion step (nuclear-norm minimization on the augmented Hermitian form) before factorization. Theorem~\ref{thm:incomplete} states the resulting consistency result conditionally on completion success, and Proposition~\ref{prop:nnm-sufficient} provides explicit sufficient conditions based on spikiness, observation coverage, and near-low-rank structure under which the conditional hypothesis is satisfied. The near-low-rank assumption is justified by a generative mixing-model argument and is consistent with the empirically observed low-dimensional structure of large-scale neural population activity~\cite{cunningham2014dimensionality}.

\paragraph{Related work.}
Inferring directed connectivity from partially observed multivariate time series has received considerable attention, particularly in neuroscience, where limited recording capacity and subsampling are inherent constraints. One class of methods employs explicit generative models, including time-domain latent dynamical systems~\cite{turaga2013inferring}, generalized linear models for point processes~\cite{pillow2008spatio,soudry2015efficient}, neural dynamical models and dynamic causal modelling~\cite{friston2003dynamic,kang2024integration}, and latent high-dimensional vector-autoregressive models for Granger causality~\cite{fan2020estimation}. While powerful when their assumptions are satisfied, these methods rely on specific generative models and often require carefully designed experimental protocols~\cite{soudry2015efficient}, limiting their applicability when dynamics are complex, heterogeneous, or collected across asynchronous sessions without a shared temporal reference. Our approach, in its most general form, assumes only the existence of an invertible spectral representation, which is satisfied in most applications.

A second, related line of work focuses on covariance and graph reconstruction from partial observations~\cite{bishop2014deterministic,vinci2019graph,chang2023nonparanormal}. These methods primarily recover a symmetric positive semi-definite matrix---typically a covariance or precision matrix---from pairwise measurements observed simultaneously in at least some subset of the data, with deterministic missingness patterns and observations potentially corrupted by noise. Two distinctions from our approach are worth highlighting. First, previous methods are designed for undirected graph inference and do not incorporate directionality; our framework includes a spectral factorization step that enables estimation of directed information flow between node pairs. Second, our method imposes a positive definiteness constraint on the PSD matrix, essential for ensuring a valid non-parametric PDC estimate. This requirement constrains the class of admissible completions and distinguishes our problem from covariance completion, where strict low-rank assumptions are often appropriate~\cite{bishop2014deterministic,liu2016low}.

Finally, we contrast with traditional PDC estimation. Classical PDC proceeds by fitting a finite-order VAR model, often using algorithms such as the Nuttall--Strand recursion, and then transforming the estimated coefficients into the frequency domain~\cite{baccala2007generalized}. This approach is inherently parametric and depends critically on correct model order selection and the adequacy of a finite-order VAR representation; multivariate point processes or mixed continuous and point processes cannot be modeled directly in this framework. In contrast, our method is fully non-parametric: as in earlier non-parametric evaluations of PDC~\cite{amblard2015nonparametric}, PDC is obtained directly from the spectral density matrix via canonical spectral factorization, without assuming a finite-order VAR model. The spectral density itself can be estimated using non-parametric techniques such as multi-taper spectral analysis~\cite{thomson2005spectrum}. This greatly broadens applicability to non-Gaussian data and point-process signals such as neural spike trains, where parametric assumptions may be inappropriate or ill-defined. In short, the method combines the directionality absent from covariance-based graph quilting with the fully non-parametric spectral formulation absent from VAR-based PDC, making it possible to estimate directed connectivity from asynchronous, partially overlapping recordings.


\section{Materials and methods}

In brief, the method proceeds in three main steps (Fig~\ref{fig1}C-D). \emph{(1) Stitching:} the cross-spectrum between any pair of variables is estimated from whichever recording blocks contain both, and these local estimates are averaged into a single global spectral matrix---no block need contain all variables, and the blocks need not be aligned in time. \emph{(2) Completion (optional):} if some pairs are never co-observed, the missing entries of the spectral matrix are filled in using principled matrix completion. \emph{(3) Factorization:} the completed matrix is factorized and turned into partial directed coherence (PDC), yielding the directed information flow of the whole network. The remainder of this section defines each step and states the conditions under which it recovers the true whole-network PDC.

\subsection{Directed information flow in the frequency domain}

We quantify directed interactions using partial directed coherence (PDC)~\cite{baccala2001partial,takahashi2010information}, a frequency-domain measure of the direct, directional influence of one time series on another at each frequency, after accounting for all other recorded series. PDC was originally defined through a vector autoregressive (VAR) model, but---as we use it here---it depends only on the spectral density of the process and its canonical (minimum-phase) factorization. It is therefore fully non-parametric and applies to any wide-sense stationary process with a bounded spectral density: continuous signals (e.g.\ local field potentials, calcium fluorescence), point processes (spike trains, through the Bartlett spectrum), and mixtures of the two. The VAR derivation, the equivalent non-parametric form, and the precise conditions on the spectrum are collected in Appendix~\ref{S_mathdetails}.

For a $K$-dimensional process with autoregressive transfer matrix $\bar{A}(\omega)$ (Appendix~\ref{S_mathdetails}) and innovation covariance $\Sigma$, the informational PDC from series $\ell$ to series $k$ is the normalized transfer function
\begin{equation} \label{eq:pdc}
\pi_{k\leftarrow \ell}(\omega) = \frac{\bar{A}_{k\ell}(\omega)\,\Sigma^{-1/2}_{kk}}{\sqrt{\bar{A}_{\cdot \ell}(\omega)^H \Sigma^{-1} \bar{A}_{\cdot \ell}(\omega)}},
\end{equation}
where $\bar{A}_{\cdot \ell}(\omega) = [\bar{A}_{1\ell}(\omega),\ldots,\bar{A}_{K\ell}(\omega)]^T$ and the superscript $H$ denotes the conjugate transpose. The PDC matrix $\Pi(\omega)$ has entries $\pi_{k\leftarrow \ell}(\omega)$.

This measure has the property that if $\pi_{k\leftarrow \ell}(\omega) = 0$ for all $\omega \in [-\pi, \pi)$, then $A_{k\ell}(s) = 0$ for all $s \geq 1$. If $\{X(t)\}_{t \in\mathbb{Z}}$ is Gaussian, the nullity of $\pi_{k\leftarrow \ell}(\omega)$ at all frequencies is equivalent to the conditional independence of $X_k(t)$ and the past of $X_\ell$ given the past of all remaining time series. In the Gaussian case, the integrated quantity
\begin{equation} \label{eq:infoflow}
M_{k\leftarrow \ell} = -\frac{1}{4\pi}\int_{-\pi}^{\pi}\log\!\bigl(1-|\pi_{k\leftarrow \ell}(\omega)|^2\bigr)\,d\omega
\end{equation}
admits a direct information-theoretic interpretation as the directed information flow from $X_\ell$ to $X_k$ conditional on all other time series~\cite{gelfand1959calculation,takahashi2010information}.

Crucially, the validity of PDC depends on the \emph{full} multivariate autoregressive representation of the system. If the VAR model in~\eqref{eq:AR} is estimated from an incomplete set of time series, the resulting values can differ substantially from the PDC obtained from the complete set. Unobserved variables may induce spurious directed interactions or mask existing ones through cancellation~\cite{baccala2001partial}, motivating the need for principled methods that recover PDC under partial observability.

\subsection{Observation model and the stitching problem}

Let $[K] := \{1,\ldots,K\}$ index a $K$-dimensional system, and let $\mathbf{X} = \{X(t)\}_{t\in\mathbb{Z}}$ be a $K$-dimensional wide-sense stationary stochastic process. We observe $\mathbf{X}$ through $m$ recording blocks. The $u$-th block consists of a subset $\Omega_u \subseteq [K]$ of variables observed jointly during a contiguous time window $[t_u^{\mathrm{start}},\,t_u^{\mathrm{start}} + T_u]$ of length $T_u$. Different blocks need not overlap in time; their observation windows may be disjoint, and we do not assume any shared temporal reference or external trigger that would allow blocks to be aligned. We assume throughout that $\bigcup_{u=1}^m \Omega_u = [K]$, so every variable is observed in at least one block. We refer to the PDC computed from the fully observed process $\mathbf{X}$ as the \emph{whole-network PDC}.

We will refer to the following condition.
\begin{condition}[Completeness] \label{cond:complete}
The collection of observation blocks $\{\Omega_u\}_{u=1}^m$ satisfies the \emph{completeness condition} if, for every pair $(i,j) \in [K] \times [K]$, there exists $u \in [m]$ such that $\{i,j\} \subseteq \Omega_u$.
\end{condition}
In words, completeness requires every pairwise interaction to be observed in at least one block, but no single block need contain all variables.

\medskip\noindent We address two problems concerning the recovery of whole-network PDC from partial, asynchronous observations.

\begin{problem}[Stitching under completeness]\label{prob:complete}
Suppose every pair of variables is jointly observed in at least one block (Condition~\ref{cond:complete}), although the full set is never observed simultaneously. Given the asynchronous block observations $\{X_{\Omega_u}(t)\}_{u \in [m],\, t \in [T_u]}$, can the whole-network PDC be recovered consistently as $\min_u T_u \to \infty$?
\end{problem}

\begin{problem}[Stitching with missing pairs]\label{prob:incomplete}
Suppose $\{\Omega_u\}_{u=1}^m$ does \emph{not} satisfy Condition~\ref{cond:complete}: there exist pairs $(i,j) \in [K] \times [K]$ such that no block contains both $i$ and $j$. Some pairwise interactions are therefore never observed. Can the whole-network PDC still be recovered, and under what additional conditions on the observation pattern and the underlying process?
\end{problem}

Fig~\ref{fig2}A\&F illustrates the two problem settings. In Problem~\ref{prob:complete}, recordings cover overlapping subsets of variables that collectively span all pairs; in Problem~\ref{prob:incomplete}, some pairs lie outside the union of co-observed sets.

\subsection{The SPIDER pipeline}

Our approach to both problems consists of a frequency-domain pipeline whose core logic is straightforward (Fig~\ref{fig1}C-D). We outline the steps here at the level of intuition; formal definitions and consistency results follow in the subsequent subsections.

The first step, \textbf{stitching}, addresses the core challenge of asynchronous recording. Although no single block captures all $K$ channels simultaneously, the cross-spectrum $S_{ij}(\omega)$ between any pair $(i,j)$ can be estimated from any block that contains both channels. Averaging these block-level estimates across all blocks where each pair appears reconstructs the entries of the global power spectral density (PSD) matrix $S(\omega)$ without ever requiring simultaneous observation (Fig~\ref{fig1}C). This averaging is well posed because each cross-spectrum $S_{ij}(\omega)$ is a marginal property of the pair $(i,j)$, unchanged by which other channels happen to share its block; the directed structure, by contrast, is a global property of the assembled matrix, so stitching aggregates marginal quantities and evaluates the global functional only once. The further fact enabling recovery is that, for a wide-sense stationary purely non-deterministic process whose spectrum is bounded away from zero, the PSD matrix uniquely determines the canonical (minimum-phase) spectral factorization, and therefore the whole-network PDC. Stitching thus bypasses the need for synchronization, recovering the full directed structure from independently recorded, asynchronous blocks (Fig~\ref{fig1}D).

The second step, \textbf{matrix completion}, is required when Condition~\ref{cond:complete} fails and some entries of $S(\omega)$ are not recovered by stitching (dashed entries in Fig~\ref{fig1}C). Because PDC is a global property of the system, missing entries cannot simply be ignored: completing the matrix from partial entries is ill-posed and requires a structural prior linking the unobserved entries to the observed ones (Fig~\ref{fig1}B\&D). We adopt a near-low-rank prior on $S(\omega)$, motivated by a generative mixing model in which population activity is driven by a small number of shared latent sources (Appendix~\ref{S_mathdetails}). This keeps $S(\omega)$ full rank but dominated by a few singular values---the standard target of nuclear-norm minimization (NNM) completion---consistent with the well-established low-dimensional structure of large-scale neural population activity~\cite{cunningham2014dimensionality}.

The third step, \textbf{canonical spectral factorization and calculation of PDC}. Applying Wilson's algorithm~\cite{wilson1978convergence} to the stitched $S(\omega)$ yields the transfer function matrix $H(\omega)$ and innovation covariance $\Sigma$ such that $S(\omega) = H(\omega)\,\Sigma\, H(\omega)^H$. Setting $\bar{A}(\omega) = H(\omega)^{-1}$ recovers the VAR transfer function, from which the whole-network PDC is obtained directly via~\eqref{eq:pdc}.

Two refinements address the additional demands of high-dimensional, multi-channel recordings; both are detailed where they are first used. When many channels, such as the neurons recorded within one brain area, share a category, spectral principal component analysis (PCA) replaces each group by a single canonical spectral mode before stitching, so that groups remain comparable across sessions that sampled different channels (Fig~\ref{fig1}E; \nameref{sec:areapdc}). And because finite-sample stitched matrices need not be positive definite when $K$ is large, we regularize the inverse spectrum with a complex Hermitian graphical lasso (GLASSO), which guarantees a valid factorization and reduces variance when the underlying dependence structure is sparse (Fig~\ref{fig1}F; \nameref{sec:completeobs}).

The full procedure is summarized in Algorithm~\ref{alg:stitched_pdc} (Appendix~\ref{algorithm1}).

\begin{figure}[h!]
  \centering
  \includegraphics[width=0.95\textwidth]{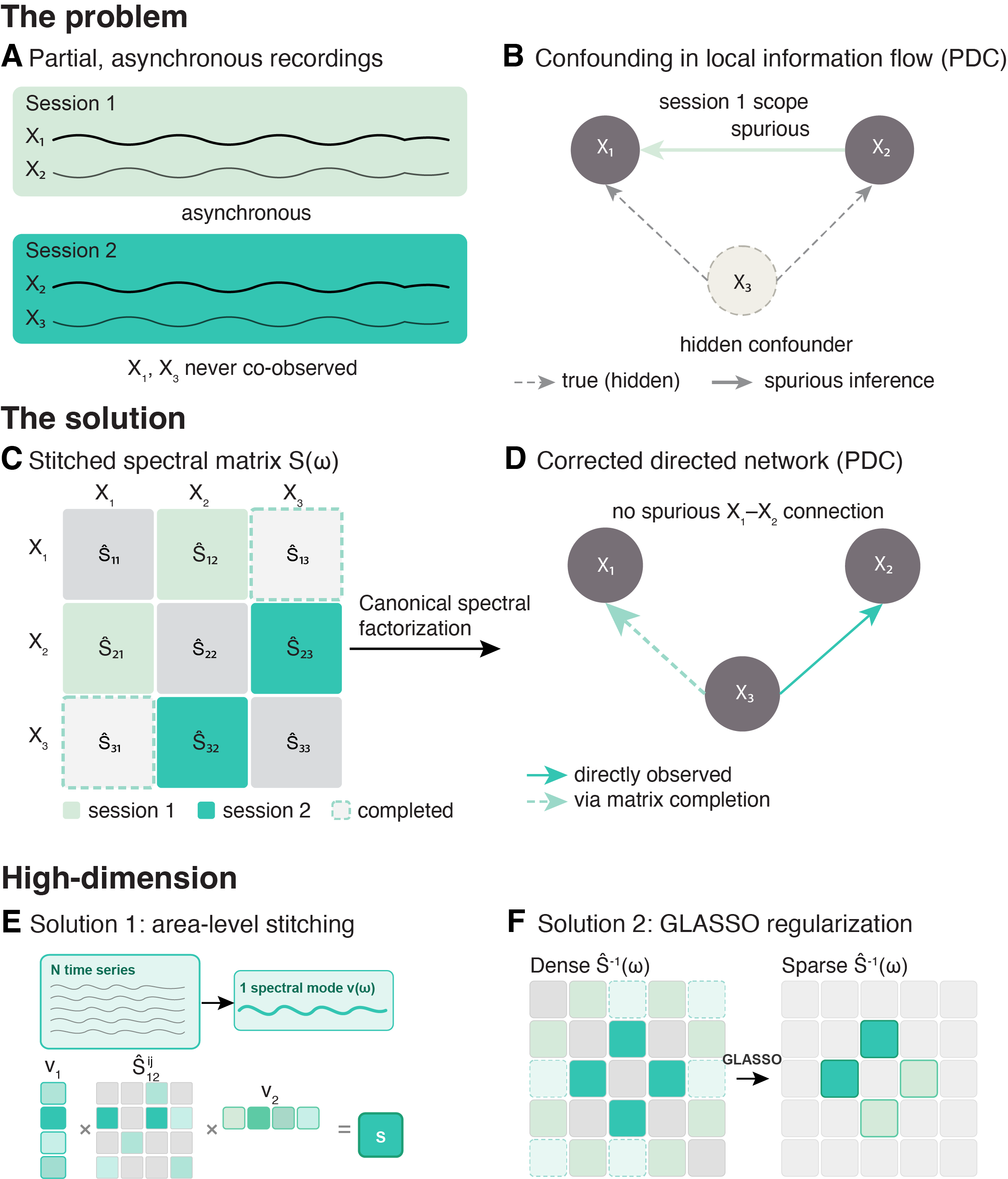}
  \vspace{0.5cm}
  \caption{{\bf Problem setting and pipeline overview.}
(A) Recording sessions cover overlapping but incomplete subsets of a network.
(B) When PDC is estimated within a single session's incomplete scope, a hidden common driver $X_3$ induces a spurious directed connection between $X_1$ and $X_2$ that is indistinguishable from true coupling under local observation.
(C) Stitched spectral matrix $\hat{S}(\omega)$. Local cross-spectral estimates from session 1 and session 2 are assembled into a global matrix. Entries that are never directly co-observed can be inferred via matrix completion.
(D) Applying PDC to the stitched spectral matrix should eliminate the spurious $X_1$--$X_2$ connection and recover the true directed information flow. Solid arrows indicate connections derived from directly observed cross-spectra; dashed arrows indicate connections inferred via matrix completion.
(E) For high-dimensional recordings with $N$ time series sharing a common category (e.g., a brain area), spectral PCA reduces the $N$-dimensional signal to a single canonical spectral mode $v(\omega)$. The area-level cross-spectrum is computed as the scalar bilinear projection $v_1^H \hat{S}_{12} v_2$, enabling area-level stitching.
(F) When the number of $N$ is large, graphical lasso (GLASSO) regularization imposes $\ell_1$ sparsity on the precision matrix $\hat{S}^{-1}(\omega)$, retaining only the strongest cross-spectral couplings and preventing overfitting in the subsequent PDC estimation.}
  \label{fig1}
\end{figure}

\subsection{Processes the method applies to}
The pipeline requires only a matrix-valued second-order spectrum $S(\omega)$ that is Hermitian, bounded above and below by positive constants, and admits a canonical minimum-phase factorization $S(\omega)=H(\omega)\,\Sigma\,H(\omega)^H$ (formal statement and proof in Appendix~\ref{S1_Appendix}). For continuous-valued processes $S(\omega)$ is the ordinary power spectral density (the Fourier transform of the autocovariance); for point processes it is the Bartlett spectrum, defined through the second-order intensity statistics. We describe three families of processes that satisfy this requirement.

Continuous-valued linear processes are the most familiar instance. Any zero-mean wide-sense stationary, purely non-deterministic process whose spectral density $S(\omega)$ is bounded away from zero and infinity (Eq~\eqref{eq:bdd-spectrum}) admits, by the Wold decomposition and the matrix Wiener--Hopf factorization, a unique minimum-phase factorization $S(\omega)=H(\omega)\,\Sigma\,H(\omega)^H$ with $H(\omega)$ and $H(\omega)^{-1}$ both analytic in the unit disk. Finite-order VAR processes, infinite-order VAR processes, vector moving-average (VMA) processes, and vector ARMA (VARMA) processes are all special cases: each corresponds to a particular rational or convergent form of $H(\omega)$, yet the directed information flow is recovered identically from the factorization without committing to any one of these parametric representations.

Point processes such as neural spike trains have no power spectral density in the ordinary sense: a train of discrete events has no continuous waveform whose autocovariance can be transformed. The role of $S(\omega)$ is instead played by the Bartlett spectrum, which describes the second-order structure of the firing intensity rather than of a sample path~\cite{bremaud2002power}. Intuitively, one passes from discrete spike times to the underlying firing rate, and the Bartlett spectrum captures how that rate co-varies across channels at each frequency. For the multivariate Hawkes process---a standard model in which each spike transiently raises the firing rate of connected units through excitation kernels---this Bartlett spectrum has a closed form that is again of the canonical type $S(\omega)=H(\omega)\,\Sigma\,H(\omega)^H$ whenever the process is stable (Appendix~\ref{S1_Appendix}). The directed structure is then carried entirely by the excitation kernels, so that an excitatory or inhibitory influence from one neuron onto another appears directly as directed information flow at the relevant frequencies. In practice the kernels never need to be known---the Bartlett spectrum is estimated directly from spike times by a multitaper point-process spectral estimator (detailed in the area-level analysis below), whose mean-rate correction removes the trivial contribution each train makes to its own spectrum.

The two cases combine naturally. Many neural recordings contain both spike trains and continuous signals such as local field potentials or calcium fluorescence, and the interactions of interest run in both directions---spike-to-field and field-to-spike. Because the pipeline depends only on a well-defined cross-spectral matrix, no separate machinery is required: point-process spectral theory defines valid auto- and cross-spectra between spike trains and continuous signals~\cite{jarvis2001sampling}, so the mixed spectral matrix is assembled from continuous--continuous, point--point (Bartlett), and point--continuous blocks and factorized exactly as before. A natural generative model here is a point process whose stochastic intensity follows an autoregressive form and is itself driven by the continuous signal; its spectrum can be obtained by estimating the intensity with any consistent smoother and computing the resulting cross-spectra. In practice this smoothing need not be a separate step: treating each spike train as a binary (zero--one) sequence and applying the multitaper estimator implicitly smooths the intensity, so the mixed cross-spectra follow directly from the raw events and continuous signals. Formal definitions of all three process families, together with the conditions under which the canonical factorization exists, are collected in Appendix~\ref{S1_Appendix} (\nameref{S_admissible}).

\subsection{The SPIDER estimator and its consistency}\label{sec:completeobs}

When every pair is co-observed (Condition~\ref{cond:complete}), the whole-network spectral matrix is reconstructed by averaging, for each pair $(i,j)$, the local cross-spectra from all blocks that contain both channels.
\begin{definition}[Stitched PSD under complete observation] \label{def:stitched}
For each $u \in [m]$, let $\hat{S}^{\Omega_u}(\omega)$ denote a non-parametric estimator of the power spectrum of $\mathbf{X}_{\Omega_u}$. For $i, j \in [K]$ and $\omega \in [-\pi, \pi)$, the \emph{stitched} power spectrum estimator is
\begin{equation} \label{eq:agrrespec}
\hat{S}_{ij}(\omega) \;=\; \frac{1}{\#\{u : i,j \in \Omega_u\}}\, \sum_{u : i,j \in \Omega_u} \hat{S}^{\Omega_u}_{ij}(\omega).
\end{equation}
\end{definition}
Each local cross-spectrum is estimated non-parametrically (multitaper spectral estimation; Appendix~\ref{S_mathdetails}). In high dimensions the stitched matrix need not be positive definite, so we regularize its inverse with a complex Hermitian graphical lasso (GLASSO), which guarantees a valid spectral factorization and reduces variance when the underlying dependence structure is sparse; the GLASSO formulation (and the matrix-completion step used below) are given in Appendix~\ref{S_mathdetails}. The directed information flow is then followed by the canonical (Wilson) factorization of the regularized matrix and the PDC normalization.

Under the completeness condition and a standard local spectral-consistency requirement on the per-block estimators (Assumption~\ref{ass:local}, Appendix~\ref{S_mathdetails}), SPIDER is uniformly consistent for every directed pair as the per-block sample sizes grow (Theorem~\ref{thm:complete_short}). Crucially, this holds even in genuinely high-dimensional regimes: the number of variables $K$ may grow nearly exponentially in the per-block sample size, provided each series has rapidly decaying temporal dependence~\cite{zhang2025spectral}---so the method applies where $K \gg T_u$ would defeat classical spectral theory.

\subsection{Recovering never-co-observed pairs}

When some pairs are never co-observed (Condition~\ref{cond:complete} fails), the corresponding spectral entries are missing, and they cannot simply be ignored: because PDC is a global functional, errors in unobserved entries propagate through inversion and factorization to the observed ones. We therefore complete the partially observed spectral matrix, frequency by frequency, using nuclear-norm minimization under the near-low-rank prior introduced above, before factorization (Appendix~\ref{S_mathdetails}). Recovery is then guaranteed conditionally: if completion is accurate, SPIDER remains uniformly consistent (Theorem~\ref{thm:incomplete}), which decouples the guarantee from the particular completion algorithm. We also give explicit, checkable conditions on the recording design under which nuclear-norm completion succeeds---essentially, that spectral energy is spread across many channel pairs and that enough pairs are co-observed, the required coverage scaling as $r K \log K$ up to logarithmic factors, where $r$ is the effective dimensionality (Proposition~\ref{prop:nnm-sufficient}, Appendix~\ref{S1_Appendix}). Because electrode placement varies unpredictably across sessions and animals, the set of co-observed pairs is well modeled as a random graph, and this coverage requirement is mild whenever $r$ is small relative to the number of areas.

\subsection{Area-level analysis for multi-session electrophysiology}\label{sec:areapdc}

In multi-session population recordings the individual neurons recorded within a brain area differ from session to session, whereas the \emph{areas} provide a stable set of variables. We therefore summarize each area by a single canonical population mode and stitch at the level of areas. For each session and area we estimate the neuron-by-neuron cross-spectral matrix with a multitaper point-process estimator---mean-rate corrected, so that each spike train's trivial self-contribution is removed---and reduce it to a one-dimensional area signal by projecting onto its leading frequency-domain principal component (spectral PCA; Fig~\ref{fig1}E). The resulting area-by-area cross-spectra are normalized to coherences, so that absolute power differences across sessions do not contaminate the pooled estimate. We retain an area only if its leading spectral mode is well defined, explaining at least a fixed fraction of the within-area spectral variance ($\rho_{\min}=0.10$ here); this assumes that the dominant population dynamics of an area (for example theta in CA1 or beta in motor cortex) are conserved across animals even though the specific neurons differ. The multitaper and spectral-PCA equations are given in Appendix~\ref{S_mathdetails}.

Stitching across sessions is valid only when local estimates of the same area-pair are reproducible across sessions---the empirical counterpart of the local spectral-consistency condition behind Theorem~\ref{thm:complete_short}. We therefore add a session-selection step: a session is retained only if the correlation between its area-pair spectral profiles and those of a reference session exceeds a data-driven threshold (Validation~2, \nameref{sec:neuropixel}). The session-averaged coherences are then assembled into the global matrix as in~\eqref{eq:agrrespec}, missing entries are imputed where needed, and the matrix is factorized with Wilson's algorithm to yield the area-level SPIDER estimate.

\subsection{Evaluation metrics}

Three complementary metrics assess estimation performance from different perspectives.

\paragraph{Mean squared error of PDC}
For quantitative assessment of estimation accuracy at the level of individual directed interactions of each frequency, we compute the mean squared error (MSE) of the off-diagonal entries of the PDC magnitude, averaged across all directed pairs.

\paragraph{Pearson correlation of integrated information flow}
MSE is a pointwise, frequency-resolved measure and may be dominated by high-PDC entries. To assess whether the overall pattern of directed information flow is faithfully recovered, we report the Pearson correlation between the integrated information-flow matrices derived from the SPIDER and reference estimates.
Directed information flow from $X_\ell$ to $X_k$ is quantified through \eqref{eq:infoflow}.
The Pearson correlation is computed between the vectorized off-diagonal entries of the estimated and reference
information flow matrices, providing a scale-invariant measure of structural similarity that is robust to overall amplitude differences between estimation conditions.

\paragraph{Area under the ROC curve}
To evaluate the recovery of directed network topology, independent of edge strength, we compute the area under the receiver operating characteristic curve (ROC-AUC). This metric is reported where a binary reference is available, in particular for the comparison of the SPIDER information flow against anatomical connectivity (Fig~\ref{fig7}H): we sweep the threshold on the estimated information flow, compute true- and false-positive rates against the reference adjacency matrix, and integrate. An AUC of 0.5 corresponds to chance and 1.0 to perfect edge recovery.

\subsection{Implementation and computational complexity}

Both the GLASSO and the nuclear-norm completion operate on a real-symmetric augmentation of the complex spectral matrix (Appendix~\ref{S_mathdetails}); the GLASSO problem is solved by block coordinate descent~\cite{friedman2008sparse}. In finite samples we fix a small penalty ($\lambda=0.01$), trading a little bias for reduced variance; in practice $\lambda$ can be chosen by cross-validation on held-out co-observed pairs. The cost is dominated by frequency-wise matrix operations---GLASSO, the completion singular value decomposition, and Wilson factorization are each $O(K^3)$ per frequency---giving an overall complexity of $O(T\cdot K^3)$ on a frequency grid of size $\asymp T$. Because the frequencies are processed independently, the pipeline is embarrassingly parallel, which is the natural axis of parallelization in our implementation. The complete procedure is summarized as Algorithm~\ref{alg:stitched_pdc} in Appendix~\ref{S_mathdetails}.


\section{Results}

We first establish correctness and characterize the behavior of SPIDER on controlled simulations with known ground truth, then demonstrate its intended use on three complementary large-scale neural datasets: two-photon calcium imaging of mouse visual cortex, brain-wide Neuropixels electrophysiology pooled across sessions, animals, and laboratories, and human intracranial EEG pooled across patients.

\subsection{SPIDER recovers PDC under complete pairwise observation}

We first verify the method on a small system where the ground truth is known analytically. Time series are simulated from a stationary VAR(1) process (Fig~\ref{fig2}A). We model exactly the situation in which $X_3$ drives both $X_1$ and $X_2$ but $X_1$ and $X_2$ have no direct interaction. All pairs $(i,j) \in [K] \times [K]$ are observed in independent blocks, satisfying Condition~\ref{cond:complete}.
We compare four estimates: the theoretical PDC obtained from the Fourier transform of the VAR(1) coefficients, the non-parametric PDC computed from a single fully simultaneous recording via~\eqref{eq:nonparamPDC}, the SPIDER estimate obtained from the asynchronous blocks via Algorithm~\ref{alg:stitched_pdc}, and a pairwise-only control that estimates local PDC without integrating information across blocks.
The SPIDER estimate tracks the simultaneous estimate almost exactly across all pairs and frequencies, and both closely approximate the theoretical PDC (Fig~\ref{fig2}B). The pairwise-only control deviates systematically, especially between $X_1$ and $X_2$, where interaction does not exist in reality, confirming that aggregating the full spectral matrix across blocks is necessary to recover the correct directed structure.
Together, this confirms that asynchronous block recordings, when each pair is jointly observed at least once, contain the same directional information as a single simultaneous recording.

The VAR framework underlying the theoretical guarantees assumes linear dynamics, yet many real systems of interest, including neural circuits, are inherently nonlinear. We therefore test whether the stitching pipeline generalizes to a biophysically realistic network of Izhikevich spiking neurons, which exhibit nonlinear spike generation. The network consists of two excitatory neurons $N_1(\text{E})$ and $N_2(\text{E})$ and one inhibitory neuron $N_3(\text{I})$, connected with known directed synaptic weights (Fig~\ref{fig2}C; see Appendix~\ref{S1_Text} for model details). Because spike trains are point processes, we apply the multi-taper point-process spectral estimator of~\eqref{eq:tapered_ft}~\cite{mitra1999analysis}.
The SPIDER estimates closely track the full-simultaneous reference across all nine entries of the $3\times3$ PDC matrix and across the frequencies (Fig~\ref{fig2}D), demonstrating that the pipeline handles the spike-timing dynamics without degradation. Because the estimate is recovered at every frequency (Fig~\ref{fig2}D), directed interactions can be localized to specific frequency bands; the integrated information-flow summaries used for the larger comparisons below collapse this axis for compactness. To assess biological validity, we compare the estimated directed information flow to a ground-truth effective influence matrix derived from the simulated network. The matrix is computed as synaptic coupling strength multiplied by presynaptic firing rate, providing a proxy for the expected rate of information transmission along each connection (Fig~\ref{fig2}E). Both the full-simultaneous estimate (Pearson $r=0.89$) and the SPIDER estimate ($r = 0.96$) recover the synaptic organization faithfully. Together, these results demonstrate that SPIDER generalizes beyond linear Gaussian processes to the nonlinear, point-process setting that characterizes real neural recordings.

\begin{figure}[h!]
  \centering
  \includegraphics[width=\columnwidth]{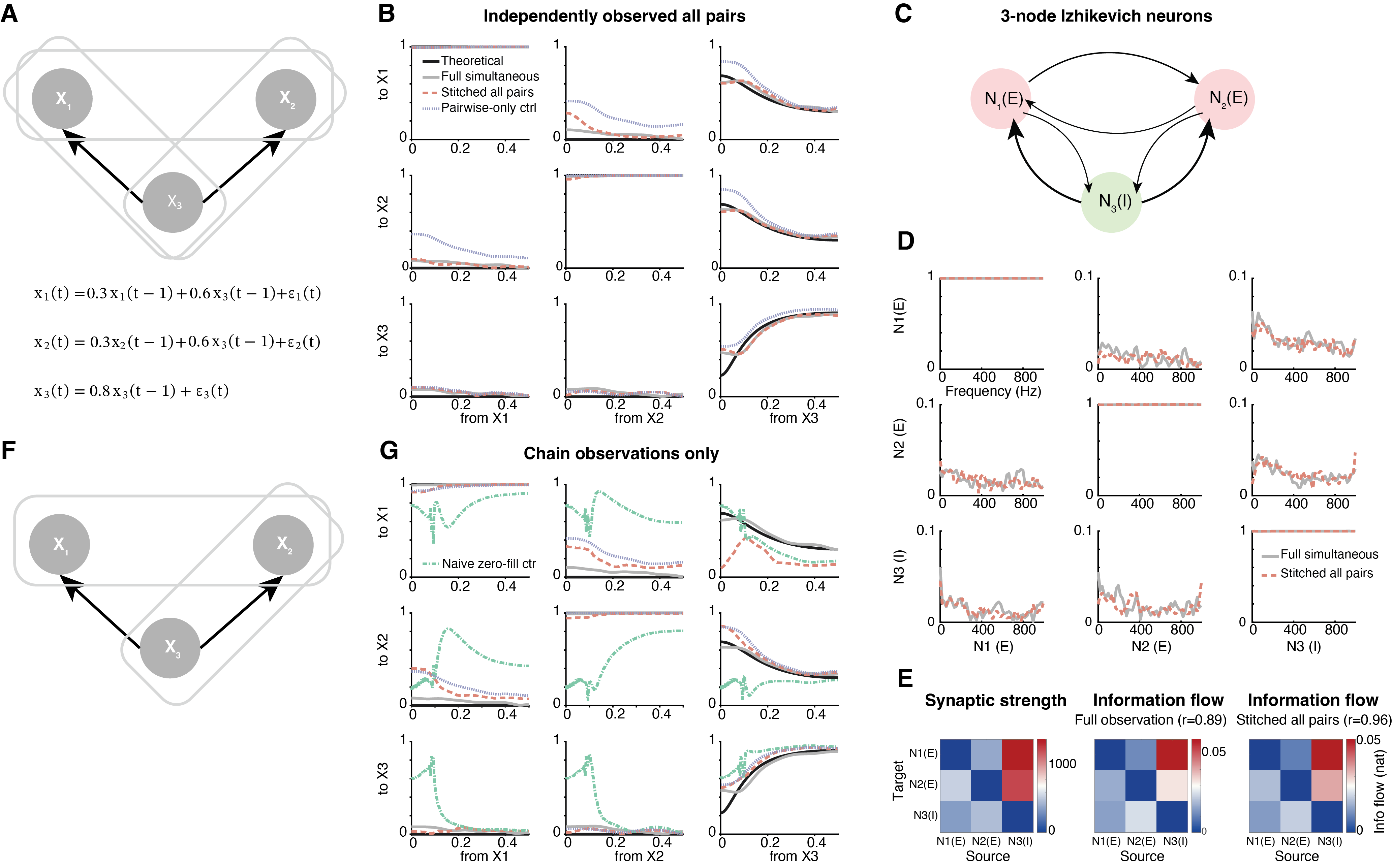}
  \vspace{0.5cm}
  \caption{{\bf SPIDER recovers directed information flow under complete and incomplete observation and in biophysically realistic spiking networks.} (A) Three-node VAR(1) system used for validation. $X_3$ drives both $X_1$ and $X_2$, while $X_1$ and $X_2$ have no direct interaction. All three pairwise subsets are observed in independent asynchronous blocks.
  (B) PDC estimates (rows: target node; columns: source node) under complete pairwise observation.
  (C) Three-node biophysically realistic network of Izhikevich neurons: two excitatory populations $N_1(E)$ and $N_2(E)$, and one inhibitory population $N_3(I)$.
  (D) Frequency-resolved PDC for the SPIDER estimate (red dashed) and full-simultaneous reference (gray).
  (E) Correspondence between ground-truth synaptic strength (left; synaptic weight times firing rate) and integrated directed information flow for full observation (middle; Pearson $r=0.89$ with synaptic strength) and SPIDER all-pairs estimation (right; $r=0.96$).
  (F) Incomplete observation design. Only the chain-structured pairs $(X_1, X_2)$ and $(X_2, X_3)$ are jointly observed; $(X_1, X_3)$ is never co-observed.
  (G) PDC estimates under chain-only observation.
  }
  \label{fig2}
\end{figure}

\subsection{Matrix completion enables recovery under incomplete pairwise observation}

We next consider the case where Condition~\ref{cond:complete} fails. Using the same three-node VAR(1) system, we adopt a chain observation design in which only the pairs $(X_1, X_2)$ and $(X_2, X_3)$ are jointly observed, while $(X_1, X_3)$ is never co-observed (Fig~\ref{fig2}F).
Recall that the true structure has $X_3$ driving both $X_1$ and $X_2$ with no direct $X_1$--$X_2$ interaction; because $X_3$ is observed with $X_2$ but not with $X_1$, any local estimate restricted to the $(X_1, X_2)$ block will conflate the shared driver with a spurious direct connection.
SPIDER estimates obtained via NNM recover the directed structure more accurately than the local pairwise control across all observed directions, and in particular substantially mitigate the spurious $X_1$--$X_2$ interaction that contaminates local estimates (Fig~\ref{fig2}G). The residual spurious component is not fully eliminated, consistent with the fact that the $(X_1, X_3)$ cross-spectrum must be inferred.
By contrast, a naive baseline that fills missing entries of $\hat{S}(\omega)$ with zeros fails to reproduce even some of the observed entries (e.g., $2 \leftarrow 3$), illustrating that ignoring missingness is not benign: errors in unobserved entries propagate to observed ones through the inversion and factorization steps.

\subsection{High-dimensional performance and the role of completion}

To assess performance in higher dimensions, we simulate $K=50$-dimensional stationary VAR(1) processes with randomly generated full-rank coefficient matrices and threshold to achieve target sparsity levels of 5\%, 50\%, 75\%, and 95\% zeros.
Innovations are i.i.d.\ Gaussian with covariance $0.1\,\mathbf{I}_K$. Each simulation runs for $n = 10{,}000$ time steps after discarding $1{,}000$ burn-in samples.
Observation blocks are constructed as contiguous, partially overlapping subsets of the $K$ variables with a fixed overlap ratio of 50\% between adjacent blocks; varying the block size ratio across \{0.1, 0.2, 0.4, 0.6\} yields the four missingness levels of 30\%, 48\%, 72\%, and 86\% shown in the Fig~\ref{fig4}.
All results are averaged over 10 independent replicates. For each configuration, we compute the MSE of the off-diagonal entries of the SPIDER estimate against two references: the theoretical PDC computed from the known VAR(1) coefficients (Fig~\ref{fig4}A), and the non-parametric PDC from a single fully simultaneous recording, which serves as a finite-sample oracle upper bound (Fig~\ref{fig4}B).

We compare seven conditions (Fig~\ref{fig4}A-B): SPIDER estimates with NNM completion for all off-diagonal pairs, observed pairs, and never-observed pairs; a local baseline that estimates PDC only within each observation block, restricted to co-observed pairs and conditioned solely on observed variables; and the corresponding three subsets of the naive zero-fill baseline. The local baseline directly answers the question of what ignoring asynchrony would obtain: each block is restricted to pairs within $\Omega_u$ and conditioned only on $\Omega_u \setminus \{i,j\}$, not on the full $[K] \setminus \{i,j\}$. As discussed above, this is a fundamentally different estimand from the whole-network PDC, and per-block estimates are subject to confounding by unobserved variables.

SPIDER with NNM completion produces significantly lower MSE than the naive zero-fill baseline for both references ($p < 0.05$, paired t-test), and at low to moderate missingness ($\leq$72\%), it outperforms the local per-block baseline for observed pairs ($p < 0.05$, paired t-test; Fig~\ref{fig4}A\&B). At very high missingness, the advantage of the NNM stitching reduces, especially for unobserved pairs: when most entries are unobserved, the structural prior cannot compensate for the lack of information, and completion becomes unreliable. The transition from the regime in which completion meaningfully helps to the regime in which it does not is consistent with the coverage condition of Proposition~\ref{prop:nnm-sufficient}: as $|\mathcal{O}|/K^2$ falls below $r \log K \cdot \log T / K$, the nuclear-norm program loses its consistency guarantees. The local per-block baseline becomes more advantageous if missingness is high: at 86\% missingness, it outperforms SPIDER.

\begin{figure}[h!]
  \centering
  \includegraphics[width=\columnwidth]{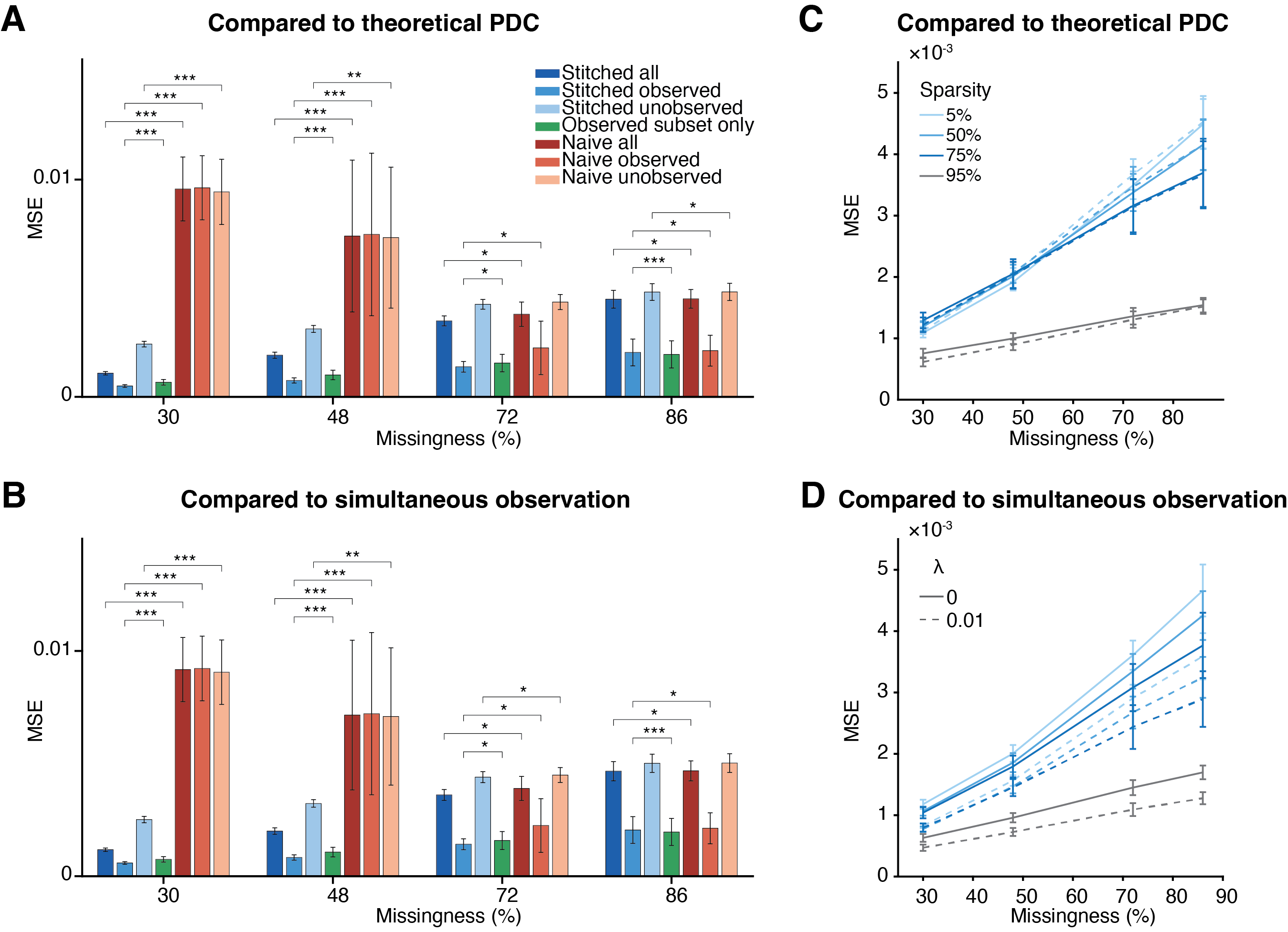}
  \vspace{0.5cm}
  \caption{{\bf SPIDER accuracy in high dimension.} Simulations use $K = 50$-dimensional stationary VAR(1) processes (10 replicates). MSE of off-diagonal PDC entries is reported against (A) the theoretical PDC and (B) the PDC from a single fully simultaneous recording. Error bars are standard deviations across simulation replicates; asterisks denote pairwise significance from paired t-tests ($^{*}p < 0.05$, $^{**}p < 0.01$, $^{***}p < 0.001$). (C, D) MSE of off-diagonal SPIDER PDC estimates with (dotted lines) and without (solid lines) GLASSO regularization, across varying levels of VAR coefficient sparsity (color) and missingness (horizontal axis). Error bars are standard deviations.}
  \label{fig4}
\end{figure}

\subsection{Effect of GLASSO regularization}

We next isolate the contribution of the GLASSO regularization step by varying both the regularization strength $\lambda$ and the sparsity of the underlying VAR coefficient matrix. When using the theoretical PDC as reference, the benefit of GLASSO regularization ($\lambda \in \{0,\, 0.01\}$) depends jointly on missingness and coefficient-matrix sparsity (Fig~\ref{fig4}C). When the underlying connectivity is sparse, GLASSO regularization reduces the error of the SPIDER estimate, with the largest gains at low missingness. As sparsity decreases, the benefit of regularization shrinks, and at very dense connectivity GLASSO can hurt accuracy.
When compared against the simultaneous-recording reference, the regularization choice $\lambda = 0.01$ outperforms the unregularized estimator across all sparsity levels, indicating that even when the true coefficient matrix is dense, the $\ell_1$ penalty corrects for finite-sample variance introduced by the stitching and completion steps (Fig~\ref{fig4}D). MSE increases monotonically with missingness under all conditions.

\subsection{Application to calcium imaging}

We applied the method to two-photon calcium imaging data from mouse visual cortex in the Allen Brain Observatory~\cite{de2020large}. We analyzed a 600~s spontaneous-activity segment ($\approx$ 18{,}000 samples at 30~Hz) during which 227 neurons were recorded simultaneously. To ensure reliable information-flow estimation, we restricted the analysis to neurons exhibiting sufficient activity (dF/F signal exceeding $2.5$ standard deviations for at least 200 time points), yielding 59 neurons. PDC here is computed on the dF/F signal rather than on deconvolved spike rates, so the recovered interactions reflect dynamics at the timescale resolved by the calcium indicator and should not be over-interpreted at faster timescales.

To assess estimation stability under finite samples, we examined how non-parametric PDC estimates depended on recording duration. Treating the full 600~s segment as a reference, we computed the mean squared error between PDC magnitudes estimated from shorter segments and the full-data estimate, excluding diagonal entries. Error decreased systematically with increasing data length, with a pronounced drop around 200~s (approximately one-third of the recording), beyond which improvements were gradual (Fig~\ref{fig6}A). The Pearson correlation between integrated information-flow matrices~\eqref{eq:infoflow} estimated from truncated and full data increased monotonically with sample size, indicating convergence. These results suggest that stable PDC estimation is achievable at the data lengths considered.

We then divided the recording into three sequential time blocks with 50\% neuron overlap (overall missingness $37\%$) and applied SPIDER with NNM completion. The reconstructed information flow matrix recovered many of the strongest directed interactions observed under full simultaneous recording (Fig~\ref{fig6}B,C). Observed entries showed strong agreement with the simultaneous-recording reference ($r = 0.86$). Imputed entries also exhibited a significant positive correlation with the reference ($r = 0.26$, $p = 8.9 \times 10^{-22}$; Fig~\ref{fig6}D), demonstrating partial recovery of unobserved interactions consistent with the simulation results above.

We examined the effect of GLASSO regularization on reconstruction accuracy. Moderate regularization significantly reduced error ($F = 10.37$, $p = 1.8 \times 10^{-7}$), while overly strong regularization degraded performance through excessive sparsification (Fig~\ref{fig6}E). Regularization reduced error for observed interactions but increased error for imputed ones, consistent with over-attenuation of weak inferred connections. Despite these effects on error magnitude, the overall information-flow pattern was largely preserved: Pearson correlations with the simultaneous-recording reference remained stable across $\lambda$ (Fig~\ref{fig6}F). The principal difference was between observed and imputed entries, with imputed entries showing systematically lower correlations ($p = 4.2 \times 10^{-60}$) but remaining significantly above zero for all $\lambda$ tested ($p < 0.05$).

Together, these results show that SPIDER robustly recovers directed information flow from partial and asynchronous neural recordings, with regularization primarily improving stability without altering the dominant directional structure.

\begin{figure}[h!]
  \centering
  \includegraphics[width=\linewidth]{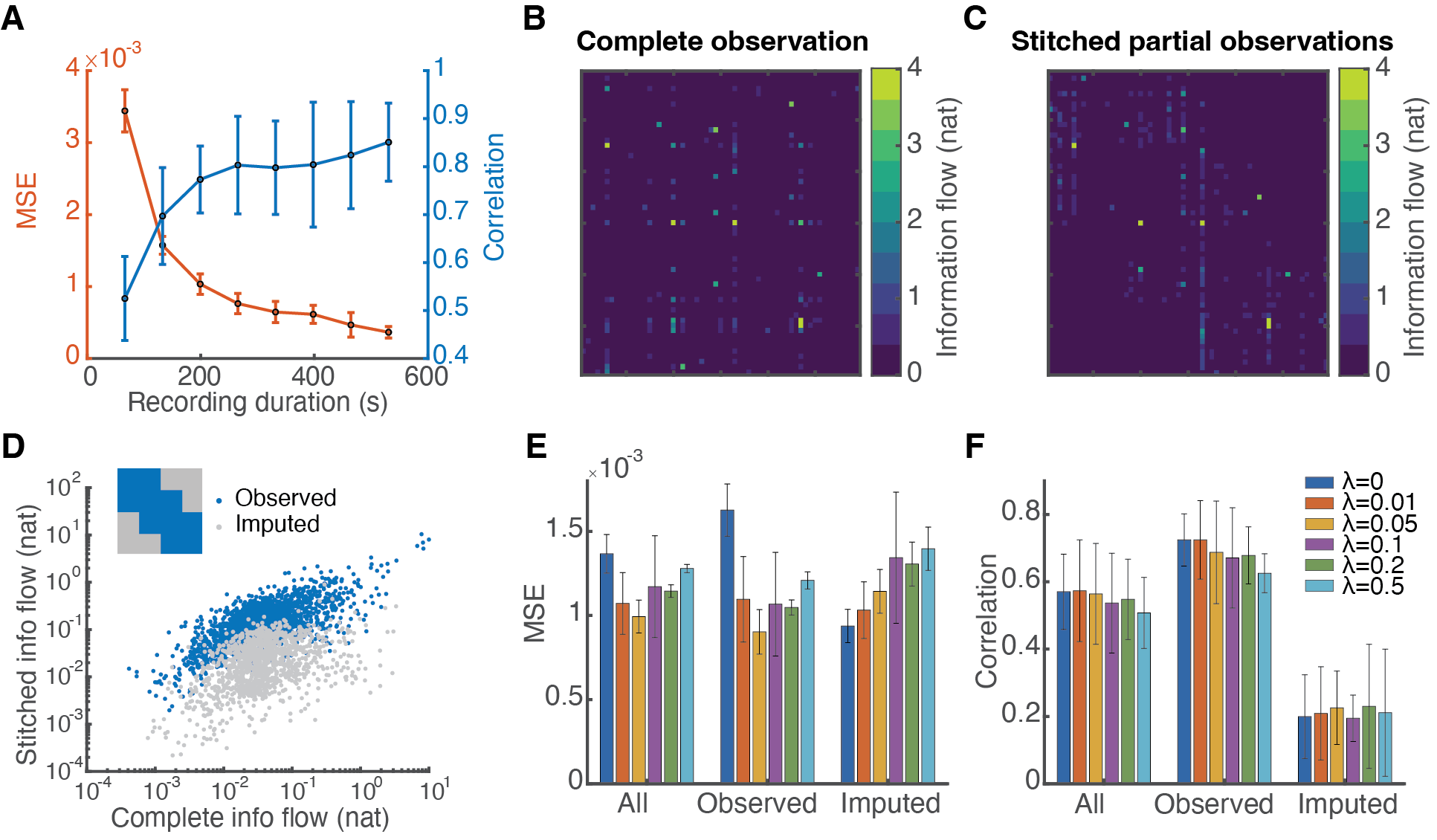}
  \vspace{0.5cm}
  \caption{{\bf Directed information-flow reconstruction from mouse calcium imaging using SPIDER.} (A) Estimation stability as a function of data length, quantified by mean squared error of PDC magnitudes and correlation of integrated information-flow matrices with the full-length estimates. Truncated datasets were generated via random circular shifts; points are medians and error bars indicate robust variability ($1.4826 \times \mathrm{MAD}$). (B) Information-flow matrix (diagonal zeroed) from complete simultaneous observation of 59 neurons. (C) Information-flow matrix reconstructed from three partially overlapping time blocks using SPIDER (multitaper: $n_{\mathrm{win}} = 128$, $[NW, L] = [5, 9]$; GLASSO $\lambda = 0.01$). (D) Element-wise comparison between SPIDER and simultaneous-recording information-flow matrices; inset distinguishes observed from imputed entries. (E) Effect of GLASSO regularization on the mean squared error of PDC magnitudes for all, observed, and imputed pairs. (F) Effect of GLASSO regularization on the correlation between SPIDER and simultaneous-recording information-flow matrices. (E--F) are averaged over 15 random block assignments; error bars denote standard deviations.}
  \label{fig6}
\end{figure}

\subsection{Application to Neuropixels data}\label{sec:neuropixel}

Neuropixels probes enable simultaneous recording of hundreds of single neurons in multiple brain areas, including deep subcortical structures. When aligned to a standard brain atlas, the shared area annotations across animals provide the anchor for stitching directed information flow across sessions, animals, and laboratories. We applied our method to the International Brain Laboratory (IBL) brain-wide map dataset~\cite{international2025reproducibility,international2025brain}, comprising Neuropixels recordings collected across 12 laboratories under a standardized experimental pipeline targeting 279 brain areas in 139 adult mice. To avoid stimulus-locked responses, we restricted the analysis to passive and spontaneous activity epochs, which lasted approximately 600 seconds per session.

Recordings were subject to a multi-stage quality control pipeline. We retained only neurons classified as good isolated units by the automated spike-sorting quality metrics, restricted to sessions with at least two valid brain areas simultaneously recorded, and excluded fiber tract regions.
At the neuron level, units with a mean firing rate below 0.2 Hz were discarded to ensure sufficient spike counts for stable spectral estimation.
At the area level, we required at least 5 simultaneously recorded neurons per area, and further excluded areas whose spectral PC1 explained less than 10\% of the total spectral variance, which we used as a data-driven threshold indicating insufficient shared population-level structure.

The area-level SPIDER procedure (stitching via matched brain areas) is detailed in Methods and illustrated in Fig~\ref{fig7}A. Since no ground-truth directed connectivity exists for the full set of brain areas, we validated the approach in three complementary ways.

\paragraph{Validation 1: within-session split}
Within the same session, the simultaneously recorded full set of areas provides an upper bound on estimation accuracy. To assess how closely stitching approximates this bound, we artificially partitioned each session into two halves, each covering a different but partially overlapping subset of brain areas, and then compared the SPIDER estimate against the PDC estimated from the full simultaneous observation (Fig~\ref{fig7}B).
In two sessions with 20 randomized brain area assignments each, the overall agreement between the SPIDER and fully-observed PDC was high ($r = 0.72 \pm 0.29$, mean $\pm$ SD; Fig~\ref{fig7}C).
Area pairs that were co-observed showed strong agreement ($r = 0.88 \pm 0.12$) and pairs requiring imputation were weaker ($r = 0.48 \pm 0.35$). This result confirms that our area-level SPIDER approach is feasible.

\paragraph{Validation 2: local spectral consistency predicts stitching accuracy}
Theorem~\ref{thm:complete_short} guarantees convergence of the SPIDER estimate provided that local spectral consistency holds across sessions. To test whether this condition is met empirically, and to derive a principled session-selection threshold, we identified 20 target sessions with a relatively large number of simultaneously recorded areas ($n=5.7\pm0.9$) and, for each, designated two source sessions sharing a partially overlapping subset of areas (Fig~\ref{fig7}D).
We used the area-level cross-spectral density (CSD) and PDC of the target session as reference, and measured (i) the Pearson correlation between the source and target CSDs and (ii) the correlation between the SPIDER information flow derived from the source sessions and the target PDC. Across all 20 sessions, higher source-target CSD similarity was associated with higher SPIDER information-flow accuracy (Spearman correlation $\rho = 0.71$, $p = 7.1\times10^{-4}$; Fig~\ref{fig7}E), indicating that local spectral consistency is a strong predictor of stitching accuracy.
From this relationship, we determined the session-selection threshold as the CSD-correlation value at the zero-crossing of the lower 95\% confidence band of the fitted regression line for the observed pairs (Fig~\ref{fig7}E).

\paragraph{Validation 3: correspondence with anatomical connectivity}
Applying the threshold determined in Validation~2, we identified the largest stitchable graph whose missingness remained below the limit ($\sim$70\%) identified in simulation (Fig~\ref{fig4}), beyond which stitching accuracy becomes indistinguishable from that of null models.
Maximizing area coverage while keeping missingness below this bound, the final dataset retained 43 sessions from 12 laboratories, encompassing 50 brain areas and 4{,}074 neurons (Fig~\ref{fig7}F \& G; see Appendix~\ref{S2_Text} for session and brain area information). The selected sessions spanned cortical, hippocampal, thalamic, and midbrain regions, providing broad coverage of the mouse brain's major functional subsystems.

As an indirect validation, we compared the SPIDER directed information flow to structural connectivity from the Allen Mouse Brain Connectivity Atlas~\cite{oh2014mesoscale}, which provides whole-brain anterograde axonal tracing data. For each pair of areas in the stitched graph, we quantified directed anatomical connectivity as the maximum anterograde projection density from the source area to the target area. Sweeping the binarization threshold of the Allen projection matrix and computing the ROC curve against the SPIDER estimate yielded AUC values approaching 0.9 at high thresholds, indicating that the most prominent anatomical projections were reflected in the estimated information flow (Fig~\ref{fig7}H). The AUC for observed area pairs was consistently above chance, whereas unobserved pairs showed near-chance AUC. The anatomical correspondence is therefore carried by the directly co-observed pairs; directed edges between areas that were never co-recorded are recovered only through completion and should be read with corresponding caution.

Taken together, these three complementary validations establish that independently recorded Neuropixels datasets can be integrated across sessions, animals, and laboratories to recover directed information flow among brain areas that were never simultaneously recorded.

\begin{figure}[h!]
  \centering
  \includegraphics[width=\linewidth]{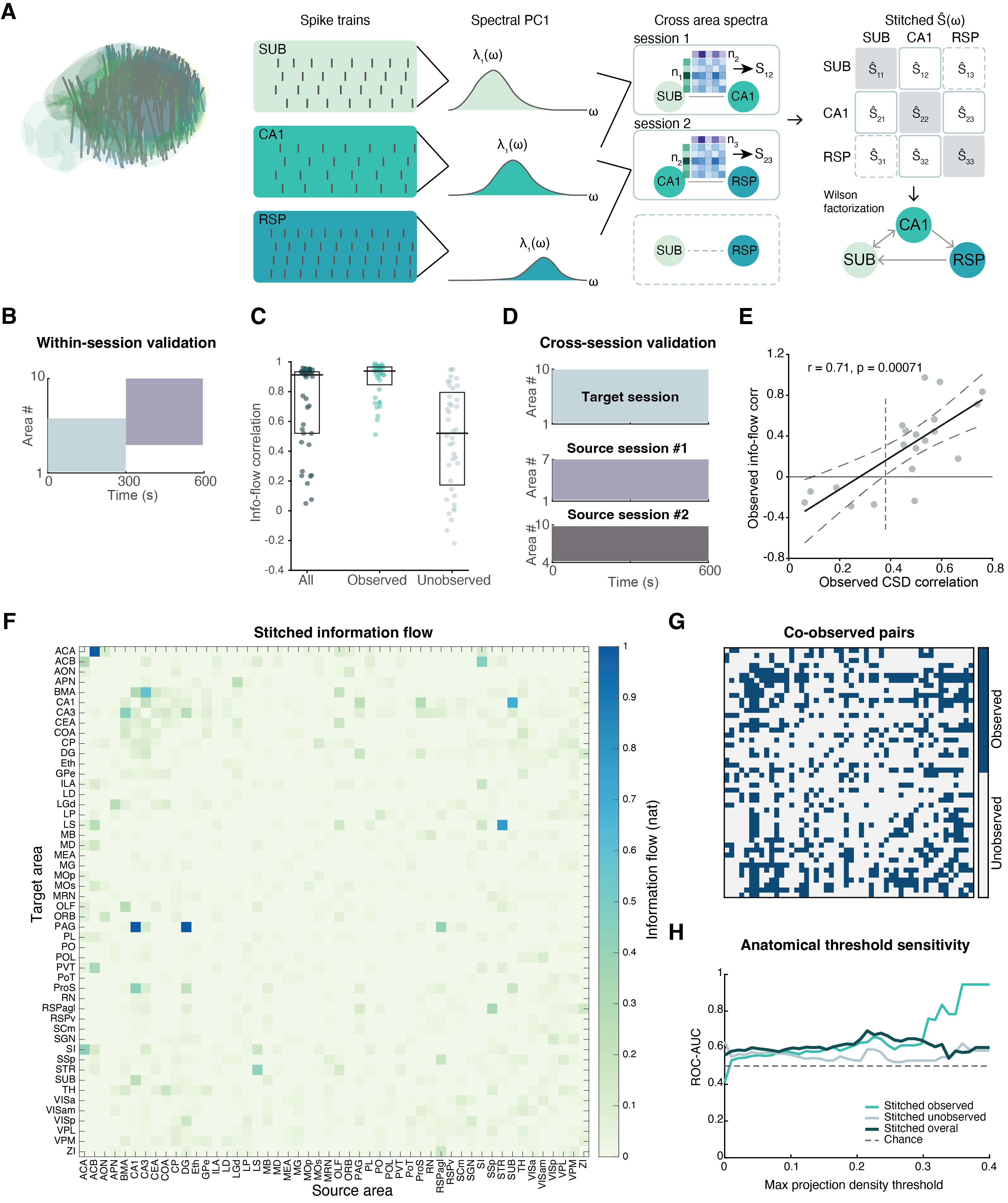}
  \vspace{0.0001cm}
  \caption{{\bf Directed information-flow reconstruction from IBL Neuropixels data.}
  (A) Pipeline for area-level stitching from Neuropixels data. Spike trains from each brain area are reduced to a single canonical spectral mode via spectral PCA. Local cross-area spectra estimated from two partially overlapping sessions are assembled into the stitched spectral matrix and factorized via Wilson's algorithm to calculate PDC.
  (B) Within-session validation design. Each session is split into two temporally non-overlapping halves covering different but partially overlapping subsets of areas.
  (C) Information-flow correlation between SPIDER and full-simultaneous estimates for all, observed, and unobserved area pairs.
  (D) Cross-session validation design. One target session with many simultaneously recorded areas is paired with two source sessions sharing overlapping but incomplete area subsets.
  (E) Observed information-flow correlation versus observed cross-spectral density (CSD) correlation across 20 target--source session triplets. The vertical dashed line marks the CSD-correlation threshold used for session selection.
  (F) SPIDER information-flow matrix across 50 brain areas. Color encodes integrated directed information flow in nats; diagonal is zeroed.
  (G) Co-observation map.
  (H) ROC-AUC as a function of the anatomical projection density threshold used to binarize the Allen Mouse Brain Connectivity Atlas.}
  \label{fig7}
\end{figure}

\subsection{A theta-band feedforward hierarchy of spontaneous information flow}\label{sec:hierarchy}

The validations above establish that SPIDER recovers directed information flow faithfully; we next asked whether the recovered brain-wide network carries interpretable organization that no single session could reveal. A directed network is \emph{feedforward}, or hierarchical, to the extent that its edges can be arranged so that influence flows consistently from lower to higher levels, and \emph{recurrent} to the extent that flow circulates in loops or is reciprocated. We quantified this with the trophic-incoherence framework of MacKay, Johnson and Sansom~\cite{mackay2020directed,johnson2014trophic}, which assigns each area a continuous \emph{trophic level} from the directed flow and measures the departure from a perfect feedforward hierarchy by an incoherence index $F_0\in[0,1]$ ($0$, perfectly feedforward; $1$, no hierarchy; Fig~\ref{fig:hierarchy}A and Appendix~\ref{app:hierarchy}). Significance was assessed against a direction-randomized null that preserves every undirected interaction strength but scrambles its direction.

Integrated over all frequencies, the SPIDER network was essentially non-hierarchical ($F_0\approx0.92$, indistinguishable from the direction-shuffled null, $p\approx0.10$): brain-wide spontaneous flow is, on the whole, recurrent. Resolving the estimate by frequency revealed a sharply localized exception. Trophic incoherence dropped to a single narrow minimum in the theta band (4--8~Hz; $F_0=0.72$, $z=-2.4$, $p=0.001$ against the null, surviving Bonferroni correction across five bands); a continuous frequency sweep confirmed a single minimum localized to the theta range (Fig~\ref{fig:hierarchy}B). The theta ordering was robust---leave-one-area-out jackknifing left it essentially unchanged (Spearman $\rho=0.99$)---and incoherence rose monotonically with frequency, so that by the gamma range the network was statistically indistinguishable from a directionless graph.

The theta hierarchy had a consistent anatomical polarity (Fig~\ref{fig:hierarchy}C): the hippocampal formation (subiculum, dentate gyrus, CA1) and ventral striatum (nucleus accumbens) occupied the source end, while midline and mediodorsal thalamus, periaqueductal gray, and medial prefrontal cortex (infralimbic area) formed the sinks. Decomposing the total directed flow into a feedforward gradient, a circulating (loop) component, and a reciprocal (mutual) component made the frequency dependence explicit (Fig~\ref{fig:hierarchy}D): reciprocal flow dominated in every band and grew from roughly one half in theta to $84\%$ in high gamma, whereas the feedforward share peaked in theta---about half of the total flow, with $92\%$ of the theta net flow directed up the hierarchy---and collapsed toward gamma.

Two points deserve emphasis. First, the finding is intrinsic to the data and required no anatomical prior: the hierarchy is read out from the directed spectra themselves. Second, it is a frequency-specific directional effect---feedforward in theta, recurrent in gamma---of exactly the kind that is invisible to time-domain or scalar connectivity measures and that emerges only because SPIDER recovers whole-network, frequency-resolved direction from data that were never recorded together. That theta is the band in which spontaneous brain-wide flow organizes into a hierarchy, with the hippocampal formation at its apex, is consistent with the established role of theta in coordinating hippocampal--cortical communication~\cite{buzsaki2002theta,colgin2013mechanisms}. We report this as a hypothesis-generating result: it rests on passive and spontaneous epochs and on directly co-observed area pairs, and it is an effective-connectivity statement rather than a claim about monosynaptic anatomy.

\begin{figure}[h!]
  \centering
  \includegraphics[width=\linewidth]{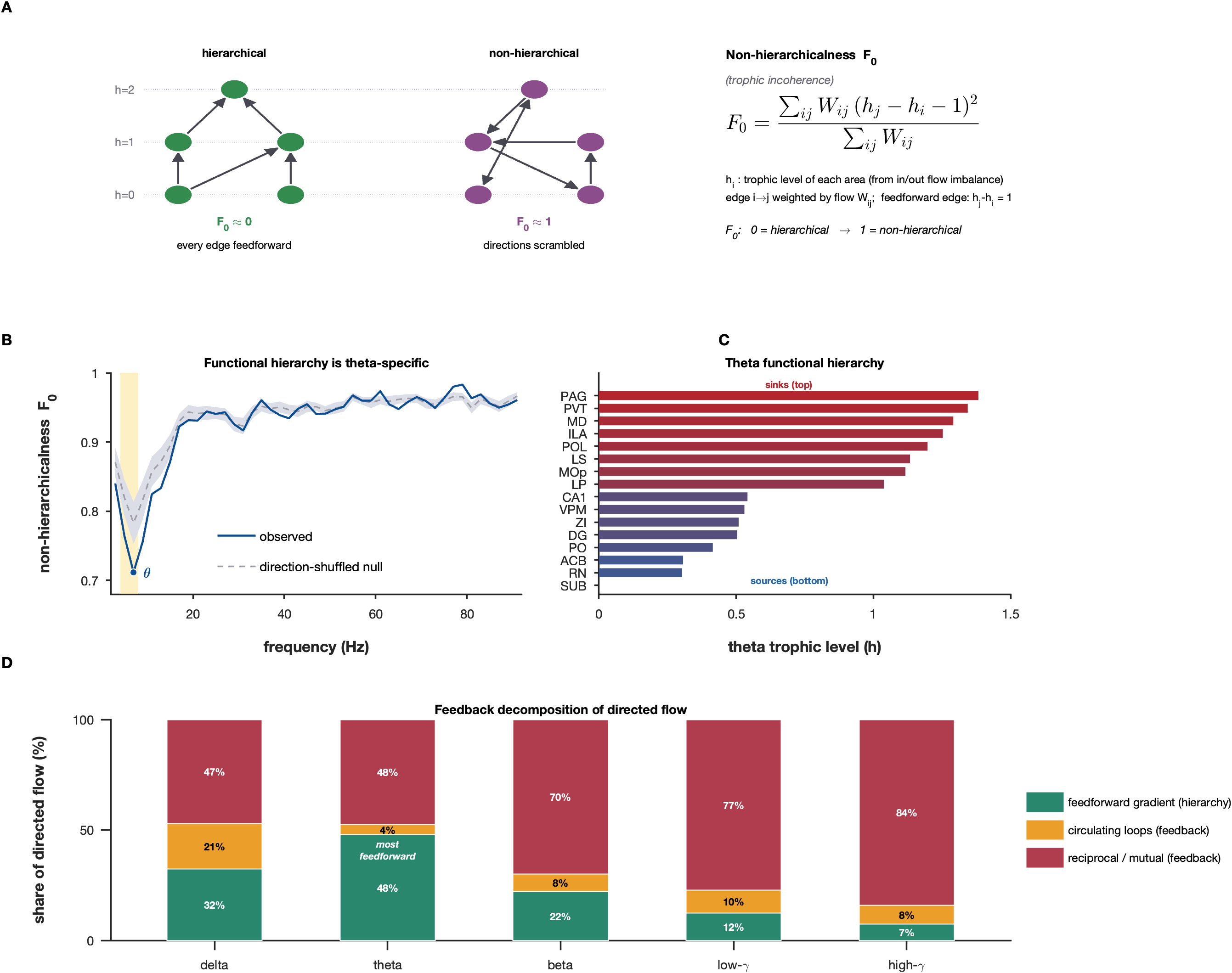}
  \caption{{\bf SPIDER reveals a theta-band feedforward hierarchy in spontaneous brain-wide flow.}
  (A) How hierarchy is quantified. A directed network is summarized by a trophic-incoherence index $F_0$ (its ``non-hierarchicalness''): each area receives a trophic level $h$, and $F_0$ is the flow-weighted variance of the level difference across edges, equal to $0$ for a perfectly feedforward hierarchy and approaching $1$ when edge directions are unstructured. Toy hierarchical and non-hierarchical networks are shown for reference.
  (B) Trophic incoherence of the SPIDER directed network as a function of frequency (sliding 4~Hz window; observed estimate in blue) against a direction-shuffled null (grey band, mean $\pm$1\,s.d.). The network departs from the null only in a narrow theta minimum (shaded) and is indistinguishable from directionless at higher frequencies.
  (C) Theta-band trophic levels of the most extreme areas (Allen Mouse Brain Atlas acronyms). Sinks (top of the hierarchy, red): PAG, periaqueductal gray; PVT, paraventricular, and MD, mediodorsal thalamus; ILA, infralimbic (medial prefrontal) cortex; POL, posterior limiting, and LP, lateral posterior thalamus; LS, lateral septum; MOp, primary motor cortex. Sources (bottom, blue): the hippocampal formation---SUB, subiculum; DG, dentate gyrus; CA1---together with ACB, nucleus accumbens (ventral striatum); VPM, ventral posteromedial, and PO, posterior thalamus; ZI, zona incerta; RN, red nucleus.
  (D) Decomposition of the total directed flow per band into a feedforward gradient (hierarchy), circulating loops, and reciprocal (mutual) flow. The feedforward share peaks in theta and is replaced by reciprocal flow toward gamma.}
  \label{fig:hierarchy}
\end{figure}

\subsection{Application to human intracranial EEG}\label{sec:ecog}

To test whether SPIDER transfers to human recordings, we applied it to the open multimodal iEEG dataset of Berezutskaya et al.~\cite{berezutskaya2022open} (OpenNeuro \texttt{ds003688}), comprising intracranial EEG from 51 patients implanted with subdural grids and stereo-EEG depth electrodes for presurgical monitoring. Here the stitching anchor is anatomical rather than experimental: each patient's electrodes were normalized to MNI space and labeled with the automated anatomical labeling (AAL) atlas~\cite{tzourio2002automated}, so that shared atlas regions play the role that shared brain areas played for the Neuropixels data, and patients play the role of sessions. As in the rodent analysis, we restricted attention to resting, spontaneous recordings. After retaining good-quality SEEG/ECoG channels and re-referencing to nearest-neighbor bipolar derivations within each electrode shaft, the bipolar cross-spectra were estimated by continuous multitaper---in place of the point-process estimator used for spike trains---each region was reduced to a single canonical spectral mode by spectral PCA (exactly as areas are reduced in the Neuropixels analysis), and the resulting region-by-region cross-spectra were passed unchanged to the stitching, completion, and Wilson-factorization steps. Because electrode coverage in this cohort is strongly left-lateralized, we report the left hemisphere. The final stitched graph comprised 38 left-hemisphere cortical AAL regions pooled from 43 patients, none recorded together, of which $498/703$ ($71\%$) region pairs were directly co-observed in at least one patient; the remaining pairs were recovered by completion, a missingness ($29\%$) well within the regime in which stitching is reliable (Fig~\ref{fig4}).

SPIDER returned a directed whole-cortex information-flow estimate across the 38 regions (Fig~\ref{fig:ecog}A), recovered entirely from patients that were never recorded together. Applying the same trophic-incoherence analysis used for the rodent network, the directed hierarchy was strongly concentrated in the theta band: trophic incoherence was low in theta ($F_0=0.47$, a pronounced feedforward hierarchy) and rose to near the recurrent limit by gamma ($F_0=0.87$, i.e.\ only marginally hierarchical; Fig~\ref{fig:ecog}B). Against a direction-shuffled null---preserving each pair's undirected interaction strength but scrambling its direction---theta was overwhelmingly significant ($z=-15.5$, $p<0.001$, Bonferroni-corrected). The higher bands also cleared this tight null ($z$ from $-8.6$ to $-5.5$) but only weakly in absolute terms, and this residual high-band gradient may partly reflect the completion step; the dominant and interpretable effect is a theta-band feedforward hierarchy. The theta hierarchy had an interpretable polarity (Fig~\ref{fig:ecog}C): ventromedial and paralimbic cortex---olfactory cortex, temporal pole, insula, and medial and orbital prefrontal cortex---occupied the source end, while lateral prefrontal and temporal cortex and early visual areas (calcarine, lingual) formed the sinks. The hippocampal formation, the apex of the rodent theta hierarchy, is subcortical and excluded here, but the cortical source end is precisely the paralimbic system with which it is most tightly coupled. Thus the theta-band feedforward hierarchy that SPIDER uncovered in the rodent brain re-emerges in human resting cortex, recovered from single-patient recordings that were never acquired together---a replication across species and recording modality that no individual dataset could yield.

\begin{figure}[h!]
  \centering
  \includegraphics[width=\linewidth]{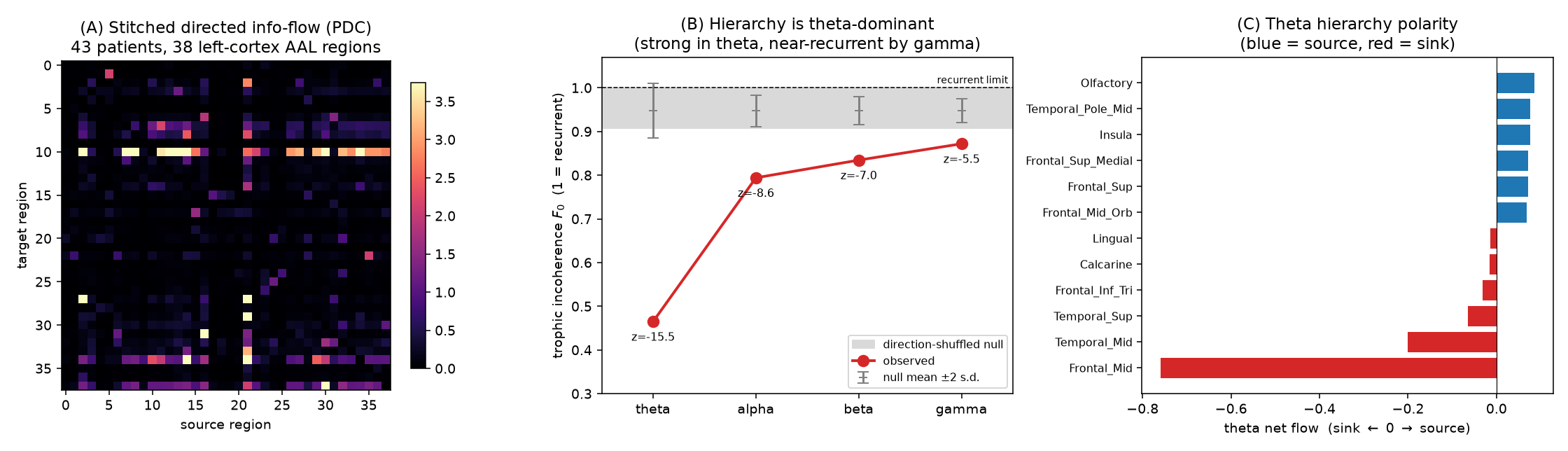}
  \caption{{\bf SPIDER recovers a theta-band feedforward hierarchy in resting human intracranial EEG.}
  Resting intracranial EEG from 43 patients (OpenNeuro \texttt{ds003688}), stitched over 38 left-hemisphere cortical AAL regions that were never recorded simultaneously.
  (A) Stitched directed information-flow (PDC) matrix; color encodes integrated directed flow from source (column) to target (row), with the diagonal zeroed.
  (B) Trophic incoherence $F_0$ per frequency band (red) against a direction-shuffled null (grey, mean $\pm$2\,s.d.); $F_0=1$ is fully recurrent and lower $F_0$ is more feedforward. Theta is strongly hierarchical ($F_0=0.47$, $z=-15.5$); higher bands rise toward the recurrent limit ($F_0=0.87$ by gamma) and are only weakly hierarchical despite clearing the tight null (per-band $z$ annotated).
  (C) Polarity of the theta hierarchy: per-region net directed flow (trophic level); regions are AAL atlas labels~\cite{tzourio2002automated} (left hemisphere). Net sources (blue) are ventromedial and paralimbic cortex---olfactory cortex, middle temporal pole, insula, medial and dorsal superior frontal gyrus, and the orbital part of the middle frontal gyrus. Net sinks (red) are lateral neocortex and early visual cortex---middle frontal gyrus (dorsolateral prefrontal), middle and superior temporal gyri, inferior frontal gyrus (pars triangularis), calcarine cortex (primary visual cortex, V1), and lingual gyrus.}
  \label{fig:ecog}
\end{figure}


\section{Discussion}

We have introduced SPIDER, a non-parametric framework for recovering frequency-domain directed information flow from asynchronous and partially observed neural recordings. The central practical advance for neuroscience is that the method removes the two requirements that have confined directed effective-connectivity analysis to single, fully simultaneous recordings: it needs neither simultaneous observation of all regions nor a shared temporal reference. Because the estimate is frequency-resolved, it captures directed interactions at distinct timescales rather than reducing each connection to a single summary value. As a result, it can reveal band-specific patterns of influence, such as theta- versus gamma-range interactions, that are inaccessible to time-domain or scalar connectivity measures. Under the completeness condition, the stitched spectral estimator and the resulting non-parametric PDC are uniformly consistent for all variable pairs (Theorem~\ref{thm:complete_short}). When completeness fails, consistent recovery remains possible if a structured matrix completion step succeeds (Theorem~\ref{thm:incomplete}); we identify spikiness of the spectrum, sufficient pairwise coverage, and near-low-rank structure as one concrete set of conditions under which nuclear-norm minimization meets that requirement (Proposition~\ref{prop:nnm-sufficient}). The simulation results confirm these theoretical predictions: NNM completion outperforms a naive zero-fill baseline at low to moderate missingness, GLASSO regularization reduces error when the underlying connectivity is sparse, and directed structure is recovered accurately for observed pairs and partially for imputed ones, degrading as pairwise coverage falls. The application to the Allen Brain Observatory data demonstrates that the method scales to real neural recordings of moderate dimension and recovers the dominant directional structure even with substantial missingness. Applied to the IBL brain-wide Neuropixels dataset, the area-level variant recovers directed information flow across 50 brain regions by stitching data from 43 sessions collected in 12 laboratories, none of which were recorded simultaneously. The strongest inferred interactions show substantial correspondence with anatomical projections from the Allen Institute for Brain Science, achieving ROC--AUC values approaching $0.9$ for observed region pairs. Applied to resting human intracranial EEG, the same pipeline reconstructed a directed whole-cortex network by pooling 43 patients with non-overlapping electrode coverage, and recovered the same theta-band feedforward hierarchy---significant against a direction-shuffled null---demonstrating the finding across species and recording modality. Across these applications the method operated at the tens-to-hundreds-channel scale of current recordings; the asymptotic regime in which the number of variables greatly exceeds the per-block sample size is characterized in simulation.

\subsection{A frequency-specific hierarchy of spontaneous flow}

Beyond validating the estimator, the IBL analysis yields a concrete neuroscientific result: integrated over frequency, the brain-wide directed network is recurrent, but in the theta band it reorganizes into a statistically significant feedforward hierarchy with the hippocampal formation at its source and thalamo-frontal and midbrain structures as sinks, while higher frequencies remain dominated by reciprocal flow (Fig~\ref{fig:hierarchy}). This band-specific directionality---feedforward in theta, recurrent in gamma---is precisely the kind of structure that scalar or time-domain connectivity measures cannot express, and it could only be read out by pooling sessions that were never recorded together. It is consistent with the role of theta in organizing hippocampal--cortical communication~\cite{buzsaki2002theta,colgin2013mechanisms}, the hippocampal formation being both a principal generator of theta and, through the subiculum, its main output stage. The same theta-band hierarchy re-emerges in resting human intracranial EEG (Fig~\ref{fig:ecog}), recovered by pooling patients with non-overlapping coverage, with its cortical source in ventromedial and paralimbic cortex---indicating that the organization is not specific to the rodent or to spike-based recording.

A more tentative observation concerns the apex of the theta hierarchy, the periaqueductal gray (PAG), which received far more directed flow than it emitted. Within our data this sink position rests on limited evidence---the PAG was directly co-observed with only six other areas---and its measured theta input was dominated by the hippocampal formation (notably CA1 and dentate gyrus), consistent with the PAG being entrained as a recipient within the hippocampal theta-coordinated network rather than with any specific hippocampo-PAG projection. This position is at least compatible with the PAG's established role as a convergence hub and obligatory final premotor gate for diverse innate behaviors---from defensive responses~\cite{bandler1994columnar,tovote2016midbrain} to vocalization, where a dedicated PAG population is necessary and sufficient to gate ultrasonic calls in mice~\cite{jurgens2009vocalization,tschida2019vocalization}---a role under which a net-recipient signature is unsurprising. We deliberately avoid a stronger reading: the descending forebrain pathways classically associated with the PAG, from medial prefrontal, cingulate and amygdalar regions, were not co-observed with it here and so could not be assessed, and the PAG's own brainstem premotor targets lie outside the recorded set; the node-level interpretation therefore awaits recordings that co-sample the PAG with its forebrain afferents. We frame the hierarchy as a whole as hypothesis-generating: it rests on passive, spontaneous epochs and on directly co-observed area pairs, it is an effective-connectivity rather than an anatomical claim, and it illustrates that frequency-resolved effective connectivity at brain-wide scale can surface testable hypotheses about the directional architecture of spontaneous activity.

\subsection{Pooling across sessions and animals}

The independence of stitching from temporal alignment is what makes the framework applicable to data pooled across sessions or animals: blocks need not share a clock, and only the marginal spectral statistics within each block enter the estimator. The key remaining requirement when pooling across heterogeneous sources is that blocks be exchangeable samples from approximately the same generative process. In practice this is the standard assumption underlying any pooled analysis of population recordings; it can be checked by comparing within-block spectra across blocks before stitching, and violations can be detected as anomalous local PSD estimates that disagree on overlapping channels.

\subsection{Non-stationarity}

Although the theory assumes wide-sense stationarity, the method extends straightforwardly to mildly non-stationary settings via standard segmentation or windowed spectral estimation, under the assumption that signal statistics vary slowly relative to the window length. Note that this requires a time axis only \emph{within} each block, so that its windows are mutually comparable; it does not reintroduce the shared temporal reference \emph{across} blocks that the method otherwise avoids. This is the same approach taken by virtually all spectral analyses of neural data and does not require modification of the stitching, completion, or regularization steps. When a recording spans changes in brain state, however, within-window non-stationarity biases the spectral estimate; restricting to passive or spontaneous epochs limits but does not eliminate this, and comparing estimates across sub-windows provides a simple diagnostic.

\subsection{Limitations}

The theoretical guarantees presented here are sufficient conditions, and the constants appearing in the scaling requirements are typically unknown and may depend on the data. Practitioners should treat these conditions as order-of-magnitude guides rather than sharp thresholds and, where possible, use diagnostics---such as the condition number of the stitched matrix, cross-validation of the GLASSO penalty, and the completion residual on held-out co-observed entries---to assess whether the method is operating in a reliable regime. With this caveat, four failure modes are worth flagging explicitly.

\textbf{Long-memory processes.} The local PSD consistency on which the entire pipeline rests requires exponential decay of the dependence-adjusted norm of each component series~\cite{zhang2025spectral}. Processes with polynomially decaying correlations, such as long-memory or fractionally integrated processes, fall outside this condition, and spectral estimation may be unreliable in those regimes. The downstream PDC consistency results no longer apply.

\textbf{Insufficient pairwise coverage.} NNM completion requires the coverage condition $|\mathcal{O}| \gtrsim r K \log K \cdot \log T$ of Proposition~\ref{prop:nnm-sufficient}. If the recording design yields fewer co-observed pairs, equivalently, if the per-pair observation probability satisfies $p_{\mathrm{obs}} \ll r \log K \cdot \log T / K$, completion is not guaranteed to recover the missing entries, and PDC estimates for unobserved pairs will be unreliable. The simulation results show that performance on unobserved pairs degrades sharply as missingness exceeds approximately $70\%$, consistent with this prediction.

\textbf{High effective rank.} If $S(\omega)$ does not admit a good low-rank approximation, for example, when activity is driven by many independent sources of comparable strength rather than a small number of dominant ones, the approximation error in the completion bound does not vanish and the near-low-rank assumption fails. In this regime, completion provides little benefit over zero-filling.

\textbf{High correlation and ill-conditioning.} When observed signals are highly correlated, the spectral matrix is near-singular and any method that requires its inversion or factorization, including ours, becomes numerically unstable. This is a fundamental difficulty rather than a limitation specific to stitching. Strong external sources driving the dynamics, or redundant channels recording overlapping signals, are typical causes. Pre-processing such data by clustering channels or extracting latent components prior to causal analysis is a sensible, practical mitigation.

\subsection{Future directions}

Several extensions are natural. First, alternative completion algorithms with different structural priors, such as nonconvex factorizations, weighted nuclear-norm variants, and explicit factor-model approaches, could be substituted for the nuclear-norm program in Step~2 of Algorithm~\ref{alg:stitched_pdc}, with the conditional structure of Theorem~\ref{thm:incomplete} ensuring that any such substitution that achieves uniform completion consistency yields PDC consistency. Second, the framework could be combined with hypothesis-testing procedures to produce calibrated confidence statements on individual directed edges, building on recent work in high-dimensional spectral inference. Third, because the pipeline decomposes into independent computations at each frequency, it is readily amenable to GPU acceleration, which could substantially expand the dimensionality of networks that can be analyzed in practice.

\section{Conclusion}
By eliminating the need for both simultaneous observation of all variables and a shared temporal reference, SPIDER enables systematic whole-brain analysis of multi-session and multi-animal datasets that could not previously be combined for directed-flow inference. It turns the fragmented, asynchronous recordings that brain-wide initiatives already produce into a single, frequency-resolved map of directed effective connectivity, and does so non-parametrically for continuous signals, spike trains, and their mixtures alike. We expect this to broaden the range of experimental designs from which whole-brain information flow can be studied.

\section*{Funding}
D.Y.T. was supported by CNPq grants 421955/2023-6 and 445096/2024-1, FAPESP grant 2013/07699-0, and Serrapilheira grant R-2401-47364. Y.S.Z. thanks CAPES-PRINT UFRN 88887.979326/2024-00, for supporting a one-month visit to the Brain Institute, UFRN, and National Natural Science Foundation of China (grant 2025ZD0215700).

\section*{Data and code availability}
The two-photon calcium-imaging data are publicly available through the Allen Brain Observatory~\cite{de2020large} at \url{https://observatory.brain-map.org/visualcoding/search/cell_list?experiment_container_id=511510989&sort_field=p_sg&sort_dir=asc}.
The Neuropixels electrophysiology data through the International Brain Laboratory brain-wide map~\cite{international2025brain}. The human intracranial EEG data are publicly available on OpenNeuro (\texttt{ds003688})~\cite{berezutskaya2022open} at \url{https://openneuro.org/datasets/ds003688}. Code implementing the SPIDER pipeline, together with the processed spectral matrices and area annotations required to reproduce the figures, can be found at \url{https://github.com/yisiszhang/stitched_pdc.git}.

\section*{Declaration of generative AI use}
During the preparation of this work, the authors used ChatGPT and Claude to help with coding and text editing. The authors reviewed and edited the output as needed and take full responsibility for the content of the published article.


\appendix

\section{Mathematical details of the SPIDER pipeline}
\label{S_mathdetails}

This appendix collects the equations underlying the narrative in Materials and Methods. Appendix~\ref{S1_Appendix} and the following sections give the formal process classes, theorems, and proofs.

\subsection{Partial directed coherence and its non-parametric form}

Consider the $K$-dimensional time series $\{X(t)\}_{t \in \mathbb{Z}}$ with $X(t)=[X_1(t),\ldots,X_K(t)]^T$ and $\mathbb{E}[X(t)] = 0$. The process admits a VAR representation of order $p \in \mathbb{N} \cup \{+\infty\}$ if
\begin{equation} \label{eq:AR}
X(t) = \sum_{s=1}^p A(s)\,X(t-s) + \varepsilon(t),
\end{equation}
where the $A(s)$ are $K \times K$ matrices and $\{\varepsilon(t)\}$ is an innovation process with $\mathbb{E}[\varepsilon(t)\varepsilon(s)^T]=0$ for $t \neq s$ and $\mathbb{E}[\varepsilon(t)\varepsilon(t)^T] = \Sigma = \Sigma^T$. The autoregressive transfer function is $\bar{A}(\omega) = I - \sum_{s=1}^p A(s)\, e^{i \omega s}$ for $\omega \in [-\pi,\pi)$. The informational PDC from series $\ell$ to series $k$ (Eq.~\eqref{eq:pdc} of the main text) is $\pi_{k\leftarrow \ell}(\omega) = \bar{A}_{k\ell}(\omega)\Sigma^{-1/2}_{kk} / \sqrt{\bar{A}_{\cdot \ell}(\omega)^H \Sigma^{-1} \bar{A}_{\cdot \ell}(\omega)}$, and the integrated information flow $M_{k\leftarrow\ell}$ is given by Eq.~\eqref{eq:infoflow}.

For the non-parametric formulation we assume a $K \times K$ PSD matrix $S(\omega)$ that is bounded uniformly from below and above: for some constant $c>0$ and almost every $\omega$,
\begin{equation} \label{eq:bdd-spectrum}
c^{-1} I \;\leq\; S(\omega) \;\leq\; c\, I,
\end{equation}
where $A \leq B$ means $B-A$ is positive semi-definite. Together with pure non-determinism this yields a canonical spectral factorization of the inverse spectrum: for almost all $\omega$,
\begin{equation} \label{eq:can-fact}
S(\omega)^{-1} \;=\; P(\omega) \;=\; \bar{A}(\omega)^H\, \Sigma^{-1}\, \bar{A}(\omega),
\end{equation}
equivalently $S(\omega) = H(\omega)\,\Sigma\, H(\omega)^H$ with $H(\omega) = \bar{A}(\omega)^{-1}$ minimum-phase (the form produced by Wilson's algorithm~\cite{wilson1978convergence}). The PDC of~\eqref{eq:pdc} can then be written using only the inverse PSD and its factorization,
\begin{equation} \label{eq:nonparamPDC}
\pi_{k\leftarrow \ell}(\omega) \;=\; \frac{\bar{A}_{k\ell}(\omega)\,\Sigma^{-1/2}_{kk}}{P_{\ell\ell}^{1/2}(\omega)},
\end{equation}
making explicit that PDC requires no finite-order autoregressive model.

\subsection{Regularization, matrix completion, and consistency ingredients}

\paragraph{Local consistency and high-dimensional scaling}
The per-block estimators are required to be locally consistent.
\begin{assumption}[Local spectral consistency] \label{ass:local}
For all $u \in [m]$ and $(i,j) \in \Omega_u \times \Omega_u$,
\begin{equation}
\sup_{\omega \in [-\pi,\pi)} \bigl|\hat{S}^{\Omega_u}_{ij}(\omega) - S^{\Omega_u}_{ij}(\omega)\bigr| \;\xrightarrow{p}\; 0
\quad \text{as } T_u \to \infty,
\end{equation}
where $T_u$ is the length of block $u$.
\end{assumption}
This holds under standard mixing and moment conditions~\cite{walden2000unified}; in high dimensions, the smoothed periodogram is consistent across all entries and frequencies with dimension allowed to grow nearly exponentially with sample size,
\begin{equation}\label{eq:zz-scaling}
\log K \;=\; o(T_u^{C}) \quad \text{for some constant } C > 0,
\end{equation}
provided each component process has an exponentially decaying dependence-adjusted norm~\cite{zhang2025spectral}, with recommended bandwidth $B=\mathcal{O}(T_u^{1/2}/\log K)$.

\paragraph{Complex Hermitian GLASSO}
Because the stitched estimator $\hat{S}(\omega)$ need not be positive definite, we map the complex Hermitian matrix to its equivalent real symmetric $2K \times 2K$ form
\begin{equation}\label{eq:augmented}
C(\omega) \;=\;
\begin{pmatrix}
\mathrm{Re}\,\hat{S}(\omega) & -\mathrm{Im}\,\hat{S}(\omega) \\
\mathrm{Im}\,\hat{S}(\omega) & \phantom{-}\mathrm{Re}\,\hat{S}(\omega)
\end{pmatrix} \;\in\; \mathbb{R}^{2K \times 2K},
\end{equation}
and solve the real-symmetric GLASSO
\begin{equation}\label{eq:glasso}
\hat{\Theta}(\omega) \;=\; \arg\min_{\Theta \succ 0}\; \bigl\langle \Theta,\, C(\omega)\bigr\rangle \;-\; \log\det \Theta \;+\; \lambda\, \|\Theta\|_{1,\mathrm{off}},
\end{equation}
with $\langle A, B \rangle = \mathrm{tr}(A^T B)$ and $\|\Theta\|_{1,\mathrm{off}} = \sum_{i \neq j} |\Theta_{ij}|$. The regularized estimator is obtained by recombining the appropriate $K \times K$ blocks of $\hat{\Theta}(\omega)$ and inverting,
\begin{equation}\label{eq:tilde-S}
\tilde{S}(\omega) \;=\; \bigl([\hat{\Theta}(\omega)]_{1:K,\,1:K} + i\,[\hat{\Theta}(\omega)]_{K+1:2K,\,1:K}\bigr)^{-1},
\end{equation}
which is Hermitian positive definite with a stable inverse.
\begin{definition}[SPIDER estimator under completeness]\label{def:stitchedPDC}
Assume Condition~\ref{cond:complete} holds, and let $\tilde{S}(\omega)$ be the regularized stitched power spectrum. The \emph{SPIDER estimator} is obtained by applying the canonical spectral factorization to $\tilde{S}(\omega)$ and plugging the resulting factors into~\eqref{eq:nonparamPDC}.
\end{definition}

\paragraph{Near-low-rank prior and matrix completion}
The completion step assumes that population activity is driven by $r \ll K$ latent sources $Z(t)$ through a linear mixing model
\begin{equation}\label{eq:mixing}
X(t) = A\, Z(t) + \varepsilon(t),
\qquad A \in \mathbb{C}^{K \times r},
\end{equation}
with independent low-power sensor noise $\varepsilon(t)$, so that
\begin{equation}\label{eq:lowrank-plus-noise}
S(\omega) = A\, S_Z(\omega)\, A^H + S_\varepsilon(\omega),
\end{equation}
which is full rank but approximately low rank. In the incomplete case the stitched matrix is only partially specified.
\begin{definition}[Stitched PSD under incomplete observation]\label{def:stitched_incomplete}
Let $\{\Omega_u\}_{u=1}^m$ satisfy $\bigcup_{u=1}^m \Omega_u = [K]$ but not necessarily Condition~\ref{cond:complete}, let $\mathcal{O} = \{(i,j): \exists\, u,\ i,j \in \Omega_u\}$ be the set of observed pairs, and let $N_{ij} = \#\{u : i,j \in \Omega_u\}$. The stitched PSD is
\[
\hat{S}_{ij}(\omega) \;=\;
\begin{cases}
\displaystyle \frac{1}{N_{ij}} \sum_{u : i,j \in \Omega_u} \hat{S}^{\Omega_u}_{ij}(\omega), & (i,j) \in \mathcal{O},\\[2ex]
\text{unobserved}, & (i,j) \notin \mathcal{O}.
\end{cases}
\]
\end{definition}
Missing entries are recovered by nuclear-norm minimization over Hermitian positive definite matrices (Algorithm~\ref{alg:stitched_pdc}, Step~2). Modeling the observation pattern as random, with each pair observed independently with probability $p_{\mathrm{obs}}$, the coverage condition of Proposition~\ref{prop:nnm-sufficient} becomes
\begin{equation}\label{eq:coverage-prob}
p_{\mathrm{obs}} \;\gtrsim\; \frac{r\, \log K \cdot \log T}{K},
\qquad T = \min_u T_u,
\end{equation}
which is mild whenever $r \ll K$.

\subsection{Area-level cross-spectra via multitaper estimation and spectral PCA}

Let neuron $i$ in area $a$, session $s$, fire at times
$\mathcal{T}_i = \{t_{i,1},\ldots,t_{i,n_i}\}\subset[0,T_s]$.
Compute $L = 2NW-1$ discrete prolate spheroidal sequences (DPSS)
$\{w_l(t)\}_{l=1}^{L}$ of length $T_s$ with time-bandwidth product $NW$, with taper Fourier transforms
\begin{equation}
    H_l(\omega)
    = \int_0^{T_s} w_l(t)e^{-i\omega t}\,\mathrm{d}t.
    \label{eq:dpss_ft}
\end{equation}
The mean-rate-corrected tapered transform of neuron $i$ under taper $l$ is
\begin{equation}
    \hat{J}^{(l)}_{i}(\omega)
    = \sum_{j=1}^{n_i}
      w_l(t_{i,j})e^{-i\omega t_{i,j}}
      - H_l(\omega)\,\bar{\lambda}_i,
    \label{eq:tapered_ft}
\end{equation}
where $\bar{\lambda}_i = n_i/T_s$ is the mean firing rate, and the neuron-level cross-spectral matrix
$\hat{\mathbf S}^{(s)}_{ab}(\omega)\in\mathbb C^{N_{as}\times N_{bs}}$ has entries
\begin{equation}
    \bigl[\hat{\mathbf S}^{(s)}_{ab}(\omega)\bigr]_{ij}
    =
    \frac{1}{L\,T_s}
    \sum_{l=1}^{L}
    \overline{\hat J^{(l)}_{i}(\omega)}
    \hat J^{(l)}_{j}(\omega),
    \qquad i\in a,\; j\in b.
    \label{eq:neuron_csd}
\end{equation}

\begin{definition}[Area-level canonical cross-spectrum]
For areas $a,b\in\Omega_s$, the area-level canonical cross-spectrum is the
bilinear projection of the neuron-level PSD matrix onto the leading
spectral eigenvectors:
\begin{equation}
    \tilde{S}^{(s)}_{ab}(\omega)
    = \bigl[\mathbf{v}^{(s)}_a(\omega)\bigr]\Herm
      \hat{\mathbf{S}}^{(s)}_{ab}(\omega)\;
      \mathbf{v}^{(s)}_b(\omega)
    \;\in\;\C,
    \label{eq:projection}
\end{equation}
where $\mathbf{v}^{(s)}_a(\omega)\in\C^{N_{as}}$ (spectral PC1) is the leading eigenvector of the within-area spectral decomposition
\begin{equation}
    \hat{\mathbf{S}}^{(s)}_{aa}(\omega)
    = \mathbf{V}^{(s)}_a(\omega)\,
      \boldsymbol{\Lambda}^{(s)}_a(\omega)\,
      \bigl[\mathbf{V}^{(s)}_a(\omega)\bigr]\Herm .
    \label{eq:eig}
\end{equation}
\end{definition}
The phase ambiguity in $\mathbf{v}^{(s)}_a(\omega)$ is fixed so that the element of maximum absolute magnitude is real and positive,
\begin{equation}
    \mathbf{v}^{(s)}_a(\omega)
    \;\leftarrow\;
    \frac{\overline{[\mathbf{v}^{(s)}_a(\omega)]_{i^*}}}
         {\bigl|[\mathbf{v}^{(s)}_a(\omega)]_{i^*}\bigr|}\;
    \mathbf{v}^{(s)}_a(\omega),
    \qquad
    i^* = \arg\max_i \bigl|[\mathbf{v}^{(s)}_a(\omega)]_i\bigr| .
    \label{eq:sign}
\end{equation}
An area is retained only if its frequency-averaged PC1 explained variance exceeds a threshold,
\begin{equation}
    \bar{\rho}^{(s)}_a
    =
    \frac{1}{|W|}
    \sum_{\omega\in W}
    \rho^{(s)}_a(\omega)
    \;\geq\;
    \rho_{\min},
    \label{eq:quality}
\end{equation}
with $\rho_{\min}=0.10$ here. Finally, to make local estimates commensurable across sessions, the area-level cross-spectra are normalized to coherence,
\begin{equation}
    \tilde{C}^{(s)}_{ab}(\omega)
    = \frac{\tilde{S}^{(s)}_{ab}(\omega)}
           {\sqrt{\tilde{S}^{(s)}_{aa}(\omega)\;\tilde{S}^{(s)}_{bb}(\omega)}},
    \label{eq:coherence}
\end{equation}
before assembly via~\eqref{eq:agrrespec} and factorization.

\subsection{Algorithm}
\label{algorithm1}

\begin{figure}[H]
\algcaption{Stitched non-parametric partial directed coherence (SPIDER). When Condition~\ref{cond:complete} holds, skip Step~2.}\label{alg:stitched_pdc}
\begin{mdframed}
\begin{algorithmic}[1]
\REQUIRE Asynchronous blocks $\{X_{\Omega_u}\}_{u=1}^{m}$ with $\bigcup_{u=1}^{m} \Omega_u = [K]$; GLASSO penalty $\lambda \geq 0$; completion penalty $\mu > 0$; projection $\mathcal{P}_{\mathcal{O}}$ onto observed matrix entries; nuclear norm $\|\cdot\|_*$; Frobenius norm $\|\cdot\|_F$.
\ENSURE SPIDER whole-network PDC estimate $\hat{\pi}_{k\leftarrow \ell}(\omega)$ for $k, \ell \in [K]$.
\STATE \textbf{Step 1: Spectral aggregation}
\STATE Compute local PSD estimates $\hat{S}^{\Omega_u}(\omega)$ for each block (e.g., via multi-taper spectral estimation).
\STATE Construct the stitched PSD matrix $\hat{S}(\omega)$ as in~\eqref{eq:agrrespec}.
\STATE \textbf{Step 2: Structured matrix completion} \emph{(skip if Condition~\ref{cond:complete} holds)}
\STATE Map the complex stitched estimate $\hat{S}(\omega)$ to its augmented real-symmetric form
\[
C(\omega) =
\begin{pmatrix}
\mathrm{Re}\,\hat{S}(\omega) & -\mathrm{Im}\,\hat{S}(\omega) \\
\mathrm{Im}\,\hat{S}(\omega) & \phantom{-}\mathrm{Re}\,\hat{S}(\omega)
\end{pmatrix}.
\]
\STATE Solve the constrained nuclear-norm program
\[
\tilde{C}(\omega) \;=\; \arg\min_{C \succeq 0}\; \|C\|_* \;+\; \frac{\mu}{2}\, \bigl\|\mathcal{P}_{\mathcal{O}}\bigl(C - C(\omega)\bigr)\bigr\|_F^2.
\]
\STATE Reconstruct $\hat{S}^{c}(\omega) = [\tilde{C}]_{1{:}K,\,1{:}K} + i\,[\tilde{C}]_{K+1{:}2K,\,1{:}K}$.
\STATE \textbf{Step 3: Hermitian GLASSO regularization}
\STATE Map $\hat{S}^{c}(\omega)$ to its augmented real-symmetric form $C^{c}(\omega)$ as in~\eqref{eq:augmented} (with $\hat{S}^{c}=\hat{S}$ when Step~2 is skipped) and solve
\[
\hat{\Theta}(\omega) = \arg\min_{\Theta \succ 0}\; \langle \Theta,\, C^{c}(\omega) \rangle - \log\det \Theta + \lambda\,\|\Theta\|_{1,\mathrm{off}}.
\]
\STATE Recombine the $K \times K$ blocks of $\hat{\Theta}(\omega)$ into a Hermitian inverse spectrum and invert to obtain $\tilde{S}(\omega)$ as in~\eqref{eq:tilde-S}.
\STATE \textbf{Step 4: Canonical spectral factorization}
\STATE Factorize $\tilde{S}(\omega) = \hat{H}(\omega)\,\hat{\Sigma}\,\hat{H}(\omega)^{H}$ using Wilson's algorithm.
\STATE Compute $\hat{\bar{A}}(\omega) = \hat{H}(\omega)^{-1}$.
\STATE \textbf{Step 5: Plug-in PDC estimator}
\STATE \textbf{Return} the SPIDER estimate
\[
\hat{\pi}_{k\leftarrow \ell}(\omega) \;=\; \frac{\hat{\bar{A}}_{k\ell}(\omega)\, \hat{\Sigma}_{kk}^{-1/2}}{\sqrt{\,\hat{\bar{A}}_{\cdot \ell}(\omega)^{H}\, \hat{\Sigma}^{-1}\, \hat{\bar{A}}_{\cdot \ell}(\omega)\,}}.
\]
\end{algorithmic}
\end{mdframed}
\end{figure}

\subsection{Hierarchy of directed flow: trophic incoherence and flow decomposition}\label{app:hierarchy}

To characterize the directed network recovered in a given frequency band we use the integrated information-flow matrix $M$ of~\eqref{eq:infoflow}, writing $W_{ij}=M_{j\leftarrow i}$ for the flow from area $i$ to area $j$ and restricting to directly co-observed pairs. Following MacKay, Johnson and Sansom~\cite{mackay2020directed,johnson2014trophic}, each area is assigned a \emph{trophic level} $h\in\mathbb{R}^{K}$ as the solution of
\[
\bigl(\mathrm{diag}(u+v) - (W+W^{\top})\bigr)\,h \;=\; u - v, \qquad u_i=\textstyle\sum_j W_{ji},\quad v_i=\textstyle\sum_j W_{ij},
\]
with $h$ centered to fix the additive constant, where $u$ and $v$ are the weighted in- and out-strengths. The \emph{trophic incoherence}
\[
F_0 \;=\; \frac{\sum_{ij} W_{ij}\,(h_j-h_i-1)^2}{\sum_{ij} W_{ij}} \;\in\;[0,1]
\]
measures the flow-weighted dispersion of edges around a perfect one-level feedforward step: $F_0=0$ exactly when every edge ascends one level (a perfect hierarchy), and $F_0\to1$ as directions become unstructured. Significance is assessed against a null ensemble that, for each unordered pair, swaps the two directed weights with probability one half, preserving all undirected strengths while destroying directional organization; we report the $z$-score and empirical $p$-value of the observed $F_0$ relative to this ensemble.

To separate kinds of feedback we decompose the directed flow into three nonnegative shares of the total throughput $\sum_{ij}W_{ij}$. The \emph{reciprocal} (mutual) share is $\sum_{i<j} 2\min(W_{ij},W_{ji})$. The remaining net flow $F_{ij}=W_{ij}-W_{ji}$ is split by a discrete Helmholtz--Hodge decomposition into a \emph{gradient} (feedforward) part, the flow-weighted least-squares fit of $F_{ij}$ by the level difference $h_j-h_i$, and a divergence-free \emph{circulating} (loop) part; the fraction of net-flow energy in the circulating part is the cyclicity, and its complement the feedforward gradient share. These three shares are reported per band in Fig~\ref{fig:hierarchy}D.

\section{Formal characterization of admissible processes}
\label{S1_Appendix}
\label{S_admissible}

This section makes precise the class of processes to which the non-parametric
pipeline applies and states the condition under which the canonical spectral
factorization used throughout the main text exists. The spectra below are
written with the normalization $S(\omega)=\frac{1}{2\pi}\sum_{\tau\in\mathbb{Z}}
\Gamma(\tau)\,e^{-i\omega\tau}$, where $\Gamma(\tau)$ is the (matrix) covariance
sequence; any overall positive scalar in this normalization cancels in the PDC
of~\eqref{eq:nonparamPDC} and~\eqref{eq:pdc}, so the choice is immaterial for the
estimand.

\subsection{Canonical spectral factorization}
\label{S_factorization}

\begin{definition}[Admissible spectrum]
\label{def:admissible}
A $K\times K$ matrix-valued function $S:[-\pi,\pi)\to\C^{K\times K}$ is an
\emph{admissible spectrum} if (i) $S(\omega)=S(\omega)\Herm$ for almost every
$\omega$; (ii) $S$ is integrable; and (iii) there is a constant $c>0$ with
$c^{-1} I \preceq S(\omega)\preceq c\,I$ for almost every $\omega$, where
$A\preceq B$ means $B-A$ is positive semi-definite.
\end{definition}

Condition (iii) is exactly the boundedness assumption~\eqref{eq:bdd-spectrum} of
the main text; it implies in particular that $S(\omega)$ is Hermitian positive
definite almost everywhere and that $\log\det S(\omega)$ is bounded, hence
integrable.

\begin{proposition}[Existence and uniqueness of the canonical factorization]
\label{prop:wienermasani}
Let $S$ be an admissible spectrum. Then there exist a matrix-valued function
$H(\omega)$ and a constant Hermitian positive definite matrix $\Sigma$ such that
\begin{equation}
S(\omega)=H(\omega)\,\Sigma\,H(\omega)\Herm \quad\text{for a.e. }\omega,
\label{eq:S_admissible_fact}
\end{equation}
where $H$ is the boundary value of a function analytic and non-singular on the
open unit disc with $H(0)=I$ (the \emph{minimum-phase} or \emph{canonical}
factor). The pair $(H,\Sigma)$ is unique. Equivalently,
$S(\omega)^{-1}=\bar{A}(\omega)\Herm\,\Sigma^{-1}\,\bar{A}(\omega)$ with
$\bar{A}(\omega)=H(\omega)^{-1}$, which is~\eqref{eq:can-fact}.
\end{proposition}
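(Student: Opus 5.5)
The plan is to reduce the statement to the classical Wiener--Masani (matrix Szeg\H{o}) factorization theorem for full-rank stationary processes, using condition (iii) of Definition~\ref{def:admissible} to verify its hypotheses. Since $\log\det S(\omega)$ is bounded, $\int_{-\pi}^{\pi}\log\det S(\omega)\,d\omega>-\infty$, which is the Szeg\H{o}-type condition guaranteeing that the stationary process with spectrum $S$ is purely non-deterministic of full rank. By the Kolmogorov isomorphism, $S$ is the spectrum of some weakly stationary $K$-dimensional process $X$; the Wold decomposition then writes $X(t)=\sum_{s\ge0}C(s)\varepsilon(t-s)$ with $C(0)=I$, innovations $\varepsilon$ with covariance $\Sigma$, and $\Sigma$ positive definite. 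The determinant equals the Szeg\H{o} formula $\det\Sigma=\exp\{\frac{1}{2\pi}\int\log\det(2\pi S)\}$, which is positive. Setting $H(\omega)=\sum_s C(s)e^{i\omega s}$ (matching the sign convention of $\bar A$ in~\eqref{eq:AR}) gives~\eqref{eq:S_admissible_fact}, with $H$ the boundary value of the $H^2$ function $H(z)=\sum_s C(s)z^s$, $H(0)=I$.

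Next I establish non-singularity and minimum phase. The upper bound $S\preceq cI$ gives $H\in H^\infty$, because $H\Sigma H^H\preceq cI$ bounds $H$ on the circle and the Poisson representation extends the bound to the disc. The lower bound $S\succeq c^{-1}I$ gives $\|H(\omega)^{-1}\|$ bounded a.e. To show that $H^{-1}$ extends analytically, I will use the outer property coming from Wold: the closed span of $\{X(t-s)\}_{s\ge0}$ equals that of $\{\varepsilon(t-s)\}_{s\ge0}$, so $H$ is outer. For a matrix outer function with bounded boundary inverse, $\det H$ is a scalar outer function whose modulus is bounded below, so $1/\det H$ is bounded analytic, and $H^{-1}=\operatorname{adj}(H)/\det H$ is analytic and bounded on the disc. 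This yields $\bar A=H^{-1}$ and hence the equivalent form $S^{-1}=\bar A^H\Sigma^{-1}\bar A$ by direct inversion.

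For uniqueness, suppose $H_1\Sigma_1H_1^H=H_2\Sigma_2H_2^H$ with both factors canonical. Then $G=H_2^{-1}H_1$ is bounded analytic with bounded analytic inverse and satisfies $G\Sigma_1G^H=\Sigma_2$ on the circle. Hence $\Sigma_2^{-1/2}G\Sigma_1^{1/2}$ is unitary a.e. and inner, with an analytic inverse equal to its adjoint, which is anti-analytic. A matrix function that is both analytic and anti-analytic is constant, so $G\equiv G(0)=I$, which forces $H_1=H_2$ and $\Sigma_1=\Sigma_2$.

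I expect the main obstacle to be the outer/invertibility step, namely passing from boundary bounds on $H^{-1}$ to analyticity inside the disc. Instead of invoking operator-theoretic results on matrix outer functions, I will work with the scalar function $\det H$. It is outer because $\log|\det H(0)|^2=\log\det I=0$ matches the mean of $\log|\det H|^2$ given by Szeg\H{o}'s formula, and outer scalar functions whose boundary modulus is bounded below have bounded analytic reciprocals.
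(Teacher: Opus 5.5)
Your proposal is correct and takes the same route as the paper, which simply invokes the multivariate Wiener--Hopf--Masani theorem after observing that (iii) makes $\log\det S$ bounded and hence gives the Szeg\H{o} condition; you additionally write out the boundedness, invertibility (via outerness of $\det H$) and uniqueness steps that the paper leaves to the citation. The one point to make explicit is that your uniqueness step assumes the competing factor $H_2$ is outer, so that your own arguments give $H_2,H_2^{-1}\in H^\infty$; this does not follow from ``analytic and non-singular on the open disc with $H(0)=I$'' alone, since for $S\equiv 1$ the pair $h(z)=\exp\bigl(1-\tfrac{1+z}{1-z}\bigr)$, $\Sigma=e^{-2}$ also meets those literal conditions, so uniqueness must be stated within the outer class.
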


This is the multivariate Wiener--Hopf--Masani factorization theorem; the relevant
sufficient condition is the Szeg\H{o} requirement
$\int_{-\pi}^{\pi}\log\det S(\omega)\,d\omega>-\infty$, which holds automatically
under Definition~\ref{def:admissible}(iii). Wilson's algorithm~\cite{wilson1978convergence}
computes $(H,\Sigma)$ numerically and is the procedure used in
Algorithm~\ref{alg:stitched_pdc}. The non-parametric PDC of~\eqref{eq:nonparamPDC}
is a continuous functional of $(\bar{A},\Sigma)$ and is therefore well-defined for
every admissible spectrum, independently of any finite-order parametric
representation.

\subsection{Linear processes: VAR$(\infty)$, VMA, and VARMA}
\label{S_linear}

\begin{definition}[Vector linear processes]
Let $\{\varepsilon(t)\}_{t\in\mathbb{Z}}$ be a $K$-dimensional white-noise
innovation sequence with $\mathbb{E}[\varepsilon(t)]=0$,
$\mathbb{E}[\varepsilon(t)\varepsilon(t)\T]=\Sigma\succ0$, and
$\mathbb{E}[\varepsilon(t)\varepsilon(s)\T]=0$ for $t\neq s$. A zero-mean
wide-sense stationary process $\{X(t)\}$ is:
\begin{itemize}
\item a \emph{$\mathrm{VMA}(\infty)$ process} if
$X(t)=\sum_{s\geq0}B(s)\,\varepsilon(t-s)$ with $\sum_{s\geq0}\|B(s)\|<\infty$ and
$B(0)=I$;
\item a \emph{$\mathrm{VAR}(\infty)$ process} if
$X(t)=\sum_{s\geq1}A(s)\,X(t-s)+\varepsilon(t)$ with $\sum_{s\geq1}\|A(s)\|<\infty$;
\item a \emph{$\mathrm{VARMA}(p,q)$ process} if
$X(t)-\sum_{s=1}^{p}A(s)X(t-s)=\varepsilon(t)+\sum_{s=1}^{q}B(s)\varepsilon(t-s)$.
\end{itemize}
\end{definition}

Writing $\bar{A}(\omega)=I-\sum_{s\geq1}A(s)e^{-i\omega s}$ and
$\bar{B}(\omega)=I+\sum_{s\geq1}B(s)e^{-i\omega s}$, each of these processes has
spectrum
\begin{equation}
S(\omega)=\frac{1}{2\pi}\,\bar{A}(\omega)^{-1}\bar{B}(\omega)\,\Sigma\,
\bar{B}(\omega)\Herm\bar{A}(\omega)^{-H},
\label{eq:varma_spectrum}
\end{equation}
with $\bar{B}\equiv I$ for the VAR case and $\bar{A}\equiv I$ for the VMA case.

\begin{condition}[Stability and minimum phase]
\label{cond:stability}
$\det\bar{A}(z)\neq0$ and $\det\bar{B}(z)\neq0$ for all $|z|\leq1$, where
$z=e^{-i\omega}$ is extended into the closed unit disc.
\end{condition}

Under Condition~\ref{cond:stability} together with absolute summability, the
spectrum~\eqref{eq:varma_spectrum} is an admissible spectrum in the sense of
Definition~\ref{def:admissible}, and its canonical factor coincides with
$H(\omega)=\bar{A}(\omega)^{-1}\bar{B}(\omega)$. The infinite-order VAR
representation is the natural data-generating model for the high-dimensional
consistency result invoked in the main text (Eq~\eqref{eq:zz-scaling}), and the
exponential decay of $\|A(s)\|$ assumed there is a special case of absolute
summability.

\subsection{Point processes and the Bartlett spectrum}
\label{S_point}

Let $N=(N_1,\dots,N_K)$ be a $K$-variate simple, stationary point process on
$\mathbb{R}$, where $N_k(\mathrm{d}t)$ counts events of type $k$. Write
$\lambda_k=\mathbb{E}[N_k(\mathrm{d}t)]/\mathrm{d}t$ for the (constant) mean
intensity and let
\begin{equation}
\gamma_{ij}(\tau)\,\mathrm{d}\tau
=\mathrm{Cov}\!\bigl(N_i(\mathrm{d}t),\,N_j(\mathrm{d}t+\tau)\bigr)
-\delta_{ij}\,\lambda_i\,\delta(\tau)\,\mathrm{d}\tau
\end{equation}
denote the reduced second-order covariance density, where the Dirac term removes
the atom that each process contributes to its own covariance at $\tau=0$.

\begin{definition}[Bartlett spectrum]
The \emph{Bartlett spectrum} of $N$ is the $K\times K$ Hermitian matrix measure
with density
\begin{equation}
S_{ij}(\omega)=\frac{1}{2\pi}\Bigl(\delta_{ij}\,\lambda_i
+\int_{-\infty}^{\infty}\gamma_{ij}(\tau)\,e^{-i\omega\tau}\,\mathrm{d}\tau\Bigr).
\label{eq:bartlett}
\end{equation}
\end{definition}

\begin{definition}[Linear multivariate Hawkes process]
$N$ is a \emph{linear multivariate Hawkes process} with baseline
$\mu\in\mathbb{R}_{>0}^{K}$ and excitation kernels
$G:\mathbb{R}_{\geq0}\to\mathbb{R}^{K\times K}$ if its conditional intensity is
\begin{equation}
\lambda_k(t\mid\mathcal{H}_t)=\mu_k+\sum_{j=1}^{K}\int_{(-\infty,t)}G_{kj}(t-s)\,N_j(\mathrm{d}s),
\end{equation}
where $\mathcal{H}_t$ is the history up to time $t$. The process is stable
(stationary with finite intensity) if the spectral radius of
$\int_0^\infty G(u)\,\mathrm{d}u$ is strictly less than one, in which case the
stationary rates are $\Lambda=(I-\int_0^\infty G(u)\,\mathrm{d}u)^{-1}\mu$.
\end{definition}

\begin{proposition}[Bartlett spectrum of the Hawkes process]
\label{prop:hawkes}
For a stable linear multivariate Hawkes process, with
$\hat{G}(\omega)=\int_0^\infty G(u)\,e^{-i\omega u}\,\mathrm{d}u$,
\begin{equation}
S(\omega)=\frac{1}{2\pi}\bigl(I-\hat{G}(\omega)\bigr)^{-1}\diag(\Lambda)
\bigl(I-\hat{G}(\omega)\bigr)^{-H},
\label{eq:hawkes_spectrum}
\end{equation}
which is of the canonical form~\eqref{eq:S_admissible_fact} with
$H(\omega)=(I-\hat{G}(\omega))^{-1}$ and $\Sigma=\frac{1}{2\pi}\diag(\Lambda)$.
\end{proposition}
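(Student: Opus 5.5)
The plan is to recast the Hawkes process as a linear filter driven by orthogonal-increment noise, and then read off the spectrum exactly as for the VMA$(\infty)$ case in~\eqref{eq:varma_spectrum}. Write $\mathrm{d}\tilde N(t)=N(\mathrm{d}t)-\Lambda\,\mathrm{d}t$ for the centered counting measure. Define the innovation measure $\mathrm{d}M(t)=N(\mathrm{d}t)-\lambda(t\mid\mathcal{H}_t)\,\mathrm{d}t$, which is a vector martingale increment with respect to $\mathcal{H}_t$. The stationary rate satisfies $\Lambda=\mu+\bigl(\int_0^\infty G\bigr)\Lambda$, so subtracting $\Lambda$ from the conditional intensity gives
\[
\lambda(t\mid\mathcal{H}_t)-\Lambda=\int_{(-\infty,t)}G(t-s)\,\mathrm{d}\tilde N(s).
\]
It follows that $\mathrm{d}\tilde N=G*\mathrm{d}\tilde N+\mathrm{d}M$.

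\textbf{Step 1: spectrum of the innovations.} I will show that $\mathrm{d}M$ has flat spectrum $\frac{1}{2\pi}\diag(\Lambda)$. Increments over disjoint intervals are uncorrelated by the martingale property. The predictable covariation is $\mathrm{d}\langle M_i,M_j\rangle_t=\delta_{ij}\lambda_i(t)\,\mathrm{d}t$, because a simple multivariate point process has no common jumps across coordinates. Taking expectations under stationarity gives $\mathrm{Cov}(\mathrm{d}M_i,\mathrm{d}M_j)=\delta_{ij}\Lambda_i\,\mathrm{d}t$, and this has the stated spectral density. In this step I will check that the Dirac atom convention matches~\eqref{eq:bartlett}: the term $\delta_{ij}\lambda_i$ there is exactly the contribution of this white noise.

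\textbf{Step 2: inversion of the filter.} Stability is used here, under the standing assumption that $G\ge 0$ entrywise and $G$ is integrable. For $\mathrm{Im}\,\omega\le 0$ (the analytic half-plane of the causal transform) we have the entrywise bound $|\hat G(\omega)|\le\int_0^\infty G$. By Perron--Frobenius monotonicity, $\rho(\hat G(\omega))\le\rho\bigl(\int G\bigr)<1$. Hence the Neumann series $(I-\hat G)^{-1}=\sum_{n\ge0}\hat G^n$ converges uniformly. This series is exactly the cluster (branching) expansion of the Hawkes process, so $\mathrm{d}\tilde N=\sum_n G^{*n}*\mathrm{d}M$ is a well-defined stationary $L^2$ solution. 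Applying the spectral representation of linear filters to this solution gives~\eqref{eq:hawkes_spectrum}.

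\textbf{Step 3: canonical form.} The factor $H=(I-\hat G)^{-1}$ is analytic and nonsingular in the causal half-plane, since $I-\hat G$ is analytic there and $\rho(\hat G)<1$ excludes singularity. It tends to $I$ at the point at infinity of that half-plane, where $\hat G\to0$ by Riemann--Lebesgue; this point plays the role of $z=0$ in Proposition~\ref{prop:wienermasani}. The inverse $I-\hat G$ is also causal, so $H$ is minimum-phase. In addition $\Sigma=\frac{1}{2\pi}\diag(\Lambda)\succ0$. Admissibility follows from two bounds. An upper bound on $S$ comes from $\|(I-\hat G)^{-1}\|\le\sum_n\|\hat G\|^n$ in a norm adapted to the Perron vector. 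A lower bound comes from $\|I-\hat G\|\le 1+\|\int G\|$. Uniqueness of the canonical factorization then identifies $(H,\Sigma)$.

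\textbf{Main obstacle.} I expect the main obstacle to be Step 2's rigor: justifying that the stationary Hawkes process actually coincides with the $L^2$ filtered-martingale solution, so that spectral calculus applies to random measures rather than functions. I would handle this through the Poisson cluster representation (Hawkes--Oakes). Alternatively, I would derive Hawkes' integral equation for the covariance density,
\[
\gamma(\tau)=G(\tau)\diag(\Lambda)+(G*\gamma)(\tau),\qquad \tau>0,
\]
and then Fourier-transform it using the reflection symmetry $\gamma(-\tau)=\gamma(\tau)^{\top}$. A further caveat is that the continuous-time analogue of the unit-disc normalization in Proposition~\ref{prop:wienermasani} should be stated explicitly.
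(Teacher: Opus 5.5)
Your proposal is correct, but it takes a genuinely different route from the paper. The paper does not derive \eqref{eq:hawkes_spectrum}. It cites Br\'emaud--Massouli\'e for the Bartlett spectrum of a stable Hawkes process, then asserts in one sentence that stability makes $I-\hat G(\omega)$ nonsingular and analytic with analytic inverse ``on the unit disc'', hence admissible with canonical factor $(I-\hat G)^{-1}$. You instead derive the formula from the compensated-martingale identity $\mathrm{d}\tilde N=G*\mathrm{d}\tilde N+\mathrm{d}M$. This buys transparency about \emph{why} the factor is canonical. $M$ is literally the innovation process, so the factorization is the continuous-time Wold representation and $\Sigma$ is the innovation intensity. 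Minimum phase is automatic because both $H$ and $H^{-1}=I-\hat G$ are causal. The price is that you must supply second-order stationarity and the inversion of the filter equation, which the paper offloads to the citation. Once finite second moments are known, the remainder $G^{*n}*\mathrm{d}\tilde N$ vanishes in $L^2$ because the entrywise bound $|\hat G(\omega)^n|\le(\int G)^n$ decays geometrically. Both remedies you name are the standard ones.

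Your route also makes explicit two things the paper's one-liner glosses over. First, your Perron--Frobenius step needs $G\ge0$. The paper's definition admits real-valued (inhibitory) kernels, for which the linear intensity can go negative and $\rho(\int G)<1$ no longer excludes singularity: $G(u)=6e^{-2u}-3e^{-u}$ has $\int G=0$ yet $\hat G(\sqrt2)=1$. Second, for a continuous-time process the analyticity domain is the half-plane $\mathrm{Im}\,\omega<0$ with normalization $H\to I$ at infinity, not the unit disc with $H(0)=I$. When you write that analogue out, you also need to replace the integrability requirement of Definition~\ref{def:admissible}(ii). It fails on $\mathbb{R}$ because $S(\omega)\to\Sigma\neq0$. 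The right replacement is the Paley--Wiener condition $\int\log\det S(\omega)\,(1+\omega^2)^{-1}\,\mathrm{d}\omega>-\infty$, which your two bounds supply.

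In the convention check you promise in Step~1, check the lag orientation, not only the atom. Your filter calculus produces \eqref{eq:hawkes_spectrum} as the transform of $\mathrm{Cov}(N_i(\mathrm{d}t+\tau),N_j(\mathrm{d}t))$. By contrast, \eqref{eq:bartlett} pairs $N_i$ at $t$ with $N_j$ at $t+\tau$. Taken literally, it gives the transpose $S(\omega)^{\top}=S(-\omega)$, which is the spectrum of the time-reversed process. Its canonical factor is in general not $(I-\hat G(\omega))^{-1}$. For example, with only $G_{21}\neq0$, \eqref{eq:bartlett} gives $2\pi S_{12}(\omega)=\Lambda_1\hat G_{21}(\omega)$, while \eqref{eq:hawkes_spectrum} gives $\Lambda_1\overline{\hat G_{21}(\omega)}$. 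So state the result with the standard convention $\gamma_{ij}(\tau)\,\mathrm{d}\tau=\mathrm{Cov}(N_i(\mathrm{d}t+\tau),N_j(\mathrm{d}t))$ minus the atom. Under that convention your argument proves it as written.
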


Equation~\eqref{eq:hawkes_spectrum} is the Bartlett power spectral measure of the
Hawkes process~\cite{bremaud2002power}. Stability guarantees that
$I-\hat{G}(\omega)$ is non-singular for all $\omega$ and analytic with analytic
inverse on the unit disc, so $S$ is admissible and its canonical factor is
$H(\omega)=(I-\hat{G}(\omega))^{-1}$. Consequently $\bar{A}(\omega)=I-\hat{G}(\omega)$
and the PDC of a Hawkes process is the normalization of $I-\hat{G}(\omega)$
defined in~\eqref{eq:nonparamPDC}.

\subsection{Mixed point-process and continuous-valued processes}
\label{S_mixed}

Let $Z=(Y,N)$ consist of a $K_c$-dimensional continuous-valued wide-sense
stationary process $Y$ and a $K_p$-variate point process $N$, jointly
second-order stationary, with $K=K_c+K_p$. The joint second-order structure is
described by the block spectral matrix
\begin{equation}
S(\omega)=
\begin{pmatrix}
S_{YY}(\omega) & S_{YN}(\omega)\\[2pt]
S_{NY}(\omega) & S_{NN}(\omega)
\end{pmatrix},
\end{equation}
where $S_{YY}$ is the ordinary spectral density of $Y$, $S_{NN}$ is the Bartlett
spectrum~\eqref{eq:bartlett} of $N$, and the cross-blocks are the Fourier
transforms of the cross-covariance densities
$\mathrm{Cov}(Y_a(t),N_b(\mathrm{d}t+\tau))/\mathrm{d}\tau$ between the continuous
and event channels~\cite{jarvis2001sampling}. A natural generative model couples
the two through a point process whose stochastic intensity
$\lambda_k(t)=\mu_k+\sum_j\int G_{kj}(t-s)\,N_j(\mathrm{d}s)+\sum_a\int
F_{ka}(t-s)\,Y_a(s)\,\mathrm{d}s$ has an autoregressive form driven by both past
events and the continuous signal. Whenever the resulting joint spectrum $S$ is
admissible in the sense of Definition~\ref{def:admissible}, the canonical
factorization of Proposition~\ref{prop:wienermasani} applies to the full block
matrix and the PDC of~\eqref{eq:nonparamPDC} is well-defined across all
continuous and event channels. Estimation requires only a consistent estimate of
each block: the continuous and Bartlett blocks and their cross-spectra are all
obtained from the multitaper estimator, with the point channels entered as
binary event sequences whose tapering implicitly smooths the underlying
intensity.

\section{Consistency under complete stitching (Theorem~\ref{thm:complete_short})}

\begin{theorem}[Consistency of SPIDER under complete stitching]
\label{thm:complete_short}
Let $\mathbf{X}$ be a $K$-dimensional wide-sense stationary, purely non-deterministic process whose spectral density $S(\omega)$ is bounded uniformly from above and below by positive constants. Assume Condition~\ref{cond:complete} (completeness) and Assumption~\ref{ass:local} (local spectral consistency). Let $\hat{S}(\omega)$ be the stitched estimator of Definition~\ref{def:stitched}, $\tilde{S}(\omega)$ its GLASSO-regularized version (Algorithm~\ref{alg:stitched_pdc}) with regularization parameter $\lambda_T \to 0$ as $\min_u T_u \to \infty$, and $\hat{\pi}_{k\leftarrow \ell}(\omega)$ the SPIDER estimator (Definition~\ref{def:stitchedPDC}). Then for all $k, \ell \in [K]$,
\[
\sup_{\omega \in [-\pi,\pi)} \bigl|\hat{\pi}_{k\leftarrow \ell}(\omega) - \pi_{k\leftarrow \ell}(\omega)\bigr| \;\xrightarrow{p}\; 0.
\]
\end{theorem}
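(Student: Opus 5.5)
The argument is a chain of continuous-mapping steps, each carried out uniformly in $\omega$. $K$ is fixed here, so all matrix norms are equivalent and entrywise and operator-norm convergence coincide.

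\textbf{Step 1 (stitching).} By Condition~\ref{cond:complete}, every entry $(i,j)$ of $\hat S(\omega)$ is an average of finitely many local estimates $\hat S^{\Omega_u}_{ij}(\omega)$. Each local target equals the global entry: $S^{\Omega_u}_{ij}(\omega)=S_{ij}(\omega)$, because the cross-spectrum of a pair is a marginal second-order quantity of $\mathbf X$ and does not depend on which other channels share the block. The weights $1/N_{ij}$ are deterministic and $m$ is finite. Assumption~\ref{ass:local} together with the triangle inequality then gives
\[
\sup_\omega \|\hat S(\omega)-S(\omega)\| \xrightarrow{p} 0 .
\]

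\textbf{Step 2 (GLASSO).} We show $\sup_\omega\|\tilde S(\omega)-S(\omega)\|\xrightarrow{p}0$ when $\lambda_T\to0$. The map in \eqref{eq:augmented} is an isometric $*$-isomorphism from Hermitian matrices onto real symmetric matrices of the block form, so it suffices to work with the augmented matrix $C$. By \eqref{eq:bdd-spectrum}, $C_0(\omega)$ has eigenvalues in $[c^{-1},c]$. By Step~1, with probability tending to one, $\hat C(\omega)\succeq (2c)^{-1}I$ uniformly in $\omega$. On that event the objective in \eqref{eq:glasso} is strongly convex on a compact spectral slab. Its unpenalized minimizer is $\hat C^{-1}$, and a standard perturbation argument for the optimality condition $C-\Theta^{-1}+\lambda Z=0$ with $\|Z\|_\infty\le1$ gives
\[
\|\hat\Theta^{-1}-\hat C\|_{\max}\le \lambda_T .
\]
We also check that the minimizer keeps the augmented block structure; this holds because the problem is invariant under the symplectic rotation $J$, and uniqueness forces $J\hat\Theta J^\top=\hat\Theta$. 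Hence $\tilde S=\hat\Theta^{-1}$ recombined satisfies $\sup_\omega\|\tilde S-\hat S\|\le K\lambda_T\to0$. Combined with Step~1 this yields uniform convergence of $\tilde S$ to $S$, and in particular $\tilde S$ is admissible with constant $2c$ with probability tending to one.

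\textbf{Step 3 (factorization — main obstacle).} We need continuity of the Wiener--Masani map $S\mapsto(\bar A,\Sigma)$ of Proposition~\ref{prop:wienermasani} in the uniform topology on spectra bounded by $2c$. For $\Sigma$ we use the Szeg\H{o}--Kolmogorov formula
\[
\log\det\Sigma=\tfrac1{2\pi}\int\log\det S,
\]
together with the prediction-error characterization of $\Sigma$ as a minimum over one-step predictors. This minimum is a continuous functional: if $\|S_1-S_2\|_\infty\le\epsilon$, then $S_1\le S_2+\epsilon c'\,S_2$ by the lower bound, so the prediction-error matrices satisfy $\Sigma_1\le(1+c'\epsilon)\Sigma_2$, and symmetrically. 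This gives continuity of $\Sigma$.

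Uniform-in-$\omega$ continuity of $H$ is more delicate, because the Hilbert-transform (outer function) construction is not bounded on $L^\infty$. I would first try to exploit that $H$ is the boundary value of an analytic function determined by the Wold coefficients. The coefficients $B(s)$ depend continuously, in $\ell^2$, on $S$ in $L^\infty$, via the projection theorem: the Gram operators are perturbed by at most $\epsilon$ in operator norm and are uniformly invertible. This gives continuity of $H$ and $\bar A$ in $L^2(d\omega)$.

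To upgrade this to sup-norm I would add the regularity implicit in the assumptions. Either assume (as in the VAR$(\infty)$ setting of Eq.~\eqref{eq:zz-scaling}) that the spectra lie in a Wiener-algebra or H\"older class, where Wiener--L\'evy and Krein-algebra arguments make factorization continuous in sup-norm; or invoke the convergence analysis of Wilson's algorithm as a locally Lipschitz fixed-point map. The factorization target is fixed, so we only need continuity at the true $S$, not globally. I expect this step to require the most care, and possibly an additional smoothness hypothesis on $S$ and on the estimator.

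\textbf{Step 4 (PDC).} The plug-in formula \eqref{eq:nonparamPDC} is a continuous function of $(\bar A(\omega),\Sigma)$ on the set where $\Sigma\succeq$ const and the denominator $P_{\ell\ell}(\omega)=(S^{-1})_{\ell\ell}\ge c^{-1}$ is bounded away from zero. The denominator equals $(\tilde S^{-1})_{\ell\ell}$ exactly, so it converges uniformly by Step~2. Uniform continuity on this compact set, combined with the continuous mapping theorem, transfers the uniform convergence to $\sup_\omega|\hat\pi_{k\leftarrow\ell}-\pi_{k\leftarrow\ell}|\xrightarrow{p}0$ for every pair $(k,\ell)$.
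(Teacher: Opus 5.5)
Your chain (stitch, GLASSO, factor, plug in) is the same as the paper's, and Steps~1, 2 and~4 are sound. Step~2 is tighter than the paper's version, which only says that the GLASSO map is ``a continuous family of mappings'' reducing to the identity as $\lambda_T\to0$. The KKT identity $\hat\Theta^{-1}=\hat C+\lambda_T Z$, with $Z_{ii}=0$ and $|Z_{ij}|\le1$, gives your $\lambda_T$ bound exactly. Your $J$-invariance argument supplies the preservation of the complex block structure, which the paper takes for granted. The gap is Step~3, which you leave open, and the paper does not close it either. It asserts that $H$ is continuous in $\omega$ and then appeals to ``continuity of the spectral factorization mapping'' under uniform convergence, with no argument. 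The proof of Theorem~\ref{thm:incomplete} repeats the same assertion.

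Your suspicion is correct, and the problem does not come from your route. Under only $c^{-1}I\preceq S\preceq cI$ plus sup-norm consistency, the conclusion fails, even at a smooth target.

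\textbf{The counterexample.} Let $f(\omega)=\sum_{n\ge2}\sin(n\omega)/(n\log n)$. It is continuous with mean zero, while its conjugate $\tilde f(\omega)=-\sum_{n\ge2}\cos(n\omega)/(n\log n)$ is continuous for $\omega\neq0$ and tends to $-\infty$ as $\omega\to0$. Let $\epsilon_T\to0$ and let $g_T$ be the outer function with $|g_T|^2=e^{\epsilon_T f}$ and $g_T(0)=1$. Then $\arg g_T=\pm\tfrac{\epsilon_T}{2}\tilde f$ sweeps a half-line near $\omega=0$ for every $T$. Take bivariate white noise ($S=\Sigma=I$, $\pi_{1\leftarrow2}\equiv0$), one block containing both channels, and $\lambda_T=0$. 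Use the deterministic, uniformly consistent ``estimate'' $\hat S_T=V\diag(e^{\epsilon_T f},e^{-\epsilon_T f})V^{\top}$ with $V=\tfrac{1}{\sqrt2}\bigl(\begin{smallmatrix}1&1\\1&-1\end{smallmatrix}\bigr)$. Uniqueness of the canonical factor gives $\hat H=V\diag(g_T,g_T^{-1})V^{\top}$ and $\hat\Sigma=I$. Hence $\hat{\bar A}_{12}=\tfrac12(g_T^{-1}-g_T)$ and $\hat{\bar A}_{22}=\tfrac12(g_T^{-1}+g_T)$, so $|\hat\pi_{1\leftarrow2}|-|\sin\arg g_T|\to0$ uniformly. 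Therefore $\sup_\omega|\hat\pi_{1\leftarrow2}(\omega)|\to1$: a spurious edge of maximal strength. Replacing $f$ by Fej\'er means $\sigma_{N_T}f$, with $N_T\to\infty$ fast enough, makes the estimate real-analytic in $\omega$ without changing the conclusion.

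\textbf{What this means for your Step~3.} The obstruction is that conjugation is unbounded on continuous functions. So neither continuity at the fixed true $S$, nor an analysis of Wilson's iteration (which approximates the same map on a refining grid), can remove it. The repair is the one you half-suggest, but it must constrain the estimator as well as $S$. Assume $S$ lies in a Banach algebra on which the Riesz projection is bounded and canonical factorization is therefore continuous, such as a H\"older class $C^{\alpha}$ with $0<\alpha<1$ or a weighted Wiener algebra. Then strengthen Assumption~\ref{ass:local} to convergence in that algebra's norm. Alternatively, keep the stated hypotheses and weaken the conclusion to $\|\hat\pi_{k\leftarrow\ell}-\pi_{k\leftarrow\ell}\|_{L^2(d\omega)}\xrightarrow{p}0$. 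Your Wold-coefficient argument together with Step~4 already proves that.
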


\noindent\textbf{Remark} (Verifiable scaling for high-dimensional regimes).
Assumption~\ref{ass:local} is the abstract requirement of Theorem~\ref{thm:complete_short}, but for practitioners it is useful to note one concrete sufficient condition. Under exponential decay of the dependence-adjusted norm of each component series, the result of~\cite{zhang2025spectral} guarantees uniform local spectral consistency provided~\eqref{eq:zz-scaling} holds. The recommended bandwidth is $B = \mathcal{O}(T_u^{1/2}/\log K)$, which optimally balances the polynomial dependence on $T_u$ and $\log K$. In particular, $K$ may grow nearly exponentially in the per-block sample size, so Theorem~\ref{thm:complete_short} applies in genuinely high-dimensional regimes where $K \gg T_u$ would defeat classical spectral theory.

\begin{proof}
Under Condition~\ref{cond:complete}, every pair $(i,j) \in [K] \times [K]$ is contained in at least one observation block $\Omega_u$. By Assumption~\ref{ass:local}, for each block $\Omega_u$ the local spectral estimator $\hat{S}^{\Omega_u}(\omega)$ converges uniformly in probability to the corresponding submatrix $S^{\Omega_u}(\omega)$. Since each entry $\hat{S}_{ij}(\omega)$ of the stitched estimator is obtained by averaging a finite number of uniformly consistent estimators, it follows that
\[
\sup_{\omega \in [-\pi,\pi)} \|\hat{S}(\omega) - S(\omega)\| \;\xrightarrow{p}\; 0.
\]

Because $S(\omega)$ is uniformly bounded from above and below by positive constants, it is uniformly positive definite. Uniform convergence of $\hat{S}(\omega)$ implies that $\hat{S}(\omega)$ is also uniformly positive definite with probability tending to one. The GLASSO operator with regularization parameter $\lambda_T \to 0$ defines a continuous family of mappings on the space of uniformly positive definite Hermitian matrices, and the limiting mapping at $\lambda_T = 0$ reduces to the identity. Hence
\[
\sup_{\omega \in [-\pi,\pi)} \|\tilde{S}(\omega) - S(\omega)\| \;\xrightarrow{p}\; 0.
\]

Uniform positive definiteness of $S(\omega)$ implies the existence of a canonical minimum-phase spectral factorization $S(\omega) = H(\omega)\,\Sigma\, H(\omega)^H$, with $H(\omega)$ invertible and continuous in $\omega$ and $\bar{A}(\omega) = H(\omega)^{-1}$. By continuity of the spectral factorization mapping, $\tilde{S}(\omega) \to S(\omega)$ uniformly implies that the corresponding estimated factors satisfy
\[
\sup_{\omega} \|\hat{H}(\omega) - H(\omega)\| \;\xrightarrow{p}\; 0,
\qquad
\hat{\Sigma} \;\xrightarrow{p}\; \Sigma.
\]

The non-parametric partial directed coherence $\pi_{k\leftarrow \ell}(\omega)$ is a continuous functional of the inverse spectral factor $\bar{A}(\omega) = H(\omega)^{-1}$ and of $\Sigma$, with normalization terms bounded away from zero by uniform positive definiteness. Continuity of matrix inversion and of the PDC mapping therefore yields
\[
\sup_{\omega \in [-\pi,\pi)} \bigl|\hat{\pi}_{k\leftarrow \ell}(\omega) - \pi_{k\leftarrow \ell}(\omega)\bigr| \;\xrightarrow{p}\; 0
\quad \text{for all } k, \ell \in [K].
\]
\end{proof}

\section{Conditional consistency under incomplete observation (Theorem~\ref{thm:incomplete})}

\begin{theorem}[Conditional consistency of SPIDER under incomplete observation]
\label{thm:incomplete}
Let $\mathbf{X}$ be a $K$-dimensional wide-sense stationary, purely non-deterministic process with spectral density $S(\omega)$ bounded uniformly from below and above by positive constants. Let $\{\Omega_u\}_{u=1}^m$ be observation blocks not necessarily satisfying Condition~\ref{cond:complete}, and let $\mathcal{O}$ denote the set of observed index pairs. Assume Assumption~\ref{ass:local} (local spectral consistency). Let $\hat{S}(\omega)$ be the partially specified stitched estimator of Definition~\ref{def:stitched_incomplete} and let $\hat{S}^{c}(\omega)$ be a completed estimator obtained from $\hat{S}(\omega)$ via a structured matrix completion procedure satisfying
\begin{equation}\label{eq:completion-consistency}
\sup_{\omega \in [-\pi,\pi)} \bigl\|\hat{S}^{c}(\omega) - S(\omega)\bigr\| \;\xrightarrow{p}\; 0.
\end{equation}
Then the SPIDER estimator $\hat{\pi}_{k\leftarrow \ell}(\omega)$ obtained via frequency-wise GLASSO regularization with $\lambda_T \to 0$ satisfies, for all $k, \ell \in [K]$,
\[
\sup_{\omega \in [-\pi,\pi)} \bigl|\hat{\pi}_{k\leftarrow \ell}(\omega) - \pi_{k\leftarrow \ell}(\omega)\bigr| \;\xrightarrow{p}\; 0.
\]
\end{theorem}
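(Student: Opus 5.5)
The plan is to reduce Theorem~\ref{thm:incomplete} to the second half of the proof of Theorem~\ref{thm:complete_short}. The completion hypothesis~\eqref{eq:completion-consistency} replaces the stitching step entirely. Everything downstream of $\hat{S}^{c}(\omega)$ (GLASSO, Wilson factorization, PDC plug-in) is a fixed deterministic map applied to the completed matrix. So it suffices to show that this map is uniformly continuous at the true spectrum, uniformly over $\omega$, and then apply the continuous-mapping argument in probability. Local spectral consistency (Assumption~\ref{ass:local}) enters only through the hypothesis that the completion procedure succeeds, so we will not use it directly.

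\textbf{Step 1 (positive definiteness).} By~\eqref{eq:bdd-spectrum}, $c^{-1}I\preceq S(\omega)\preceq cI$. Weyl's inequality then gives, on the event $E_T=\{\sup_\omega\|\hat{S}^{c}(\omega)-S(\omega)\|\le (2c)^{-1}\}$, the bounds $(2c)^{-1}I\preceq \hat{S}^{c}(\omega)\preceq (c+1)I$ for all $\omega$. By~\eqref{eq:completion-consistency}, $P(E_T)\to1$. Here we also use that the Hermitian structure is preserved by the augmentation~\eqref{eq:augmented} and by its inverse reconstruction, so that $\hat{S}^c$ lies in the compact set $\mathcal{K}$ of Hermitian matrices with spectrum in $[(2c)^{-1},c+1]$. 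We will work on $E_T$ throughout.

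\textbf{Step 2 (GLASSO perturbation).} For $C\in\mathcal{K}$, the objective~\eqref{eq:glasso} is strongly convex on a bounded region of precision matrices. Its minimizer $\hat\Theta_\lambda(C)$ satisfies the KKT condition $\hat\Theta^{-1}=C+\lambda Z$ with $\|Z\|_\infty\le1$. Hence $\|\tilde S-\hat S^c\|\le \lambda_T\cdot 2K$ uniformly in $\omega$, with $K$ fixed. Combined with Step 1, this gives $\sup_\omega\|\tilde S(\omega)-S(\omega)\|\xrightarrow{p}0$, and $\tilde S$ stays uniformly positive definite with probability tending to one. This quantitative KKT bound is cleaner than the abstract continuity claim used in the proof of Theorem~\ref{thm:complete_short}.

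\textbf{Step 3 (factorization and PDC).} We invoke Proposition~\ref{prop:wienermasani} for $\tilde S$ and $S$, and then need continuity of $S\mapsto(H,\Sigma)$ in the uniform norm. The cleanest route uses $\log\det\Sigma=\frac{1}{2\pi}\int\log\det S$ for $\Sigma$, together with the representation of the canonical factor via the causal projection (Hilbert-transform / Wiener--Hopf) of a logarithm or of $S^{-1}$. This is the \textbf{main obstacle}: uniform (sup-norm) continuity of spectral factorization is false in general, because the causal projection is unbounded on $L^\infty$. We would try to restrict to a class where it holds. One option is a Wiener-algebra or H\"older norm, requiring the local estimators to converge in that norm. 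The other, more faithful to the paper's argument, is to note that the estimated spectra live in a finite-dimensional family; for instance, multitaper estimates are trigonometric polynomials of bounded degree for each $T$, and on such families all norms are equivalent. Failing that, we would settle for convergence of $(\hat H,\hat\Sigma)$ in $L^2$ together with pointwise-a.e. convergence, and state uniformity under an added smoothness assumption. Once the factors converge, $\hat{\bar A}=\hat H^{-1}$ converges because inversion is continuous on uniformly invertible matrices. The PDC denominators $P_{\ell\ell}(\omega)\ge (c+1)^{-1}$ are bounded away from zero, so~\eqref{eq:nonparamPDC} is Lipschitz in $(\bar A,\Sigma)$ on this set, and this yields the claimed uniform convergence in probability for every $k,\ell$.
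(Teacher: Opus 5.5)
\textbf{There is a genuine gap, and it is exactly where you put it: Step 3.} Your Steps 1 and 2 are sound. Step 2 is in fact sharper than the paper, which only asserts that the GLASSO map is continuous and reduces to inversion as $\lambda_T\to0$. Your KKT identity $\tilde S=\hat S^{c}+\lambda_T Z$ gives the explicit bound $\|\tilde S-\hat S^{c}\|\le 2K\lambda_T$. To make that step airtight, add one observation. The augmented $C$ commutes with $J=\bigl(\begin{smallmatrix}0&-I\\ I&0\end{smallmatrix}\bigr)$, and $J$ is a signed permutation sending diagonal to diagonal. So the objective in~\eqref{eq:glasso} is invariant under $\Theta\mapsto J\Theta J^{\top}$, and by uniqueness the minimizer keeps the complex block structure. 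Hence~\eqref{eq:tilde-S} returns exactly its inverse.

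In Step 3, however, you never establish $\sup_\omega\|\hat H(\omega)-H(\omega)\|\to0$, and none of your fallbacks does so under the stated hypotheses. Wiener-algebra or H\"older convergence, and $L^2$/a.e.\ convergence, change either the hypotheses or the conclusion. The finite-dimensional-family argument fails for two reasons. First, the degree of the multitaper trigonometric polynomials grows with $T$; the sup-norm of the causal projection on degree-$n$ polynomials grows like $\log n$, so the norm-equivalence constants are not uniform. Second, completion and GLASSO act pointwise in $\omega$ and do not preserve that structure anyway.

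The paper does not fill this gap either. Its proof gets past this point only by asserting that canonical factorization is continuous under uniform perturbations of the spectrum; the same step appears in the proof of Theorem~\ref{thm:complete_short}. Your suspicion that this assertion is false is correct, and it affects the statement itself. Here is a counterexample.

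Take $K=2$ and $S\equiv I$, so $\bar A\equiv I$, $\Sigma=I$ and $\pi_{1\leftarrow2}\equiv0$. Let $J=\bigl(\begin{smallmatrix}0&1\\ 1&0\end{smallmatrix}\bigr)$, $L_n=\sum_{j=1}^n j^{-1}$, $c_n=\pi/(2L_n)$ and $z=e^{-i\omega}$, and set $\bar A_n(z)=\exp\bigl(i c_n J\sum_{j=1}^n z^j/j\bigr)$. This matrix function is entire, has determinant one, and equals $I$ at $z=0$. It is therefore the canonical inverse factor, with $\Sigma_n=I$, of $S_n(\omega)=\exp\bigl(-2c_n g_n(\omega)J\bigr)$, where $g_n(\omega)=\sum_{j=1}^n \sin(j\omega)/j$.

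Since $|g_n|\le 2$, we have $\sup_\omega\|S_n(\omega)-I\|\le e^{4c_n}-1\to0$. Yet at $\omega=0$ one gets $\bar A_n=\exp(i\tfrac{\pi}{2}J)=iJ$, so $P_n(0)=I$ and $|\pi^{(n)}_{1\leftarrow2}(0)|=1$ for every $n$. Now take complete observation, local estimators equal to the corresponding submatrices of $S_{n_T}$, identity completion, and $\lambda_T\equiv 0$. This meets every hypothesis of the theorem, but the conclusion fails.

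So the missing ingredient is a hypothesis, not an idea you overlooked. A correct version needs one of two changes. One option is to require the local and completed estimators to converge in a norm in which canonical factorization is continuous: for example the Wiener-algebra norm $\sum_\tau\|\hat\Gamma(\tau)-\Gamma(\tau)\|$ of the Fourier coefficients, or a H\"older norm. The other is to settle for a weaker conclusion integrated over $\omega$ rather than a supremum. Your proposal locates the obstacle correctly but, like the paper, does not remove it.
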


\begin{proof}
By assumption, the completed spectral estimator satisfies
\[
\sup_{\omega \in [-\pi,\pi)} \|\hat{S}^{c}(\omega) - S(\omega)\| \;\xrightarrow{p}\; 0.
\]
Since the true spectral density $S(\omega)$ is uniformly bounded from below and above by positive constants, it is uniformly positive definite. Matrix inversion is continuous on this set, yielding
\[
\sup_{\omega \in [-\pi,\pi)} \|\hat{S}^{c}(\omega)^{-1} - S(\omega)^{-1}\| \;\xrightarrow{p}\; 0.
\]

Let $\hat{\Theta}(\omega)$ denote the GLASSO estimator applied frequency-wise to $\hat{S}^{c}(\omega)$ with regularization parameter $\lambda_T \to 0$. Since the penalized likelihood objective is strictly convex and smooth on the space of uniformly positive definite matrices, the GLASSO mapping is continuous and reduces to the unregularized inverse as $\lambda_T \to 0$. Consequently,
\[
\sup_{\omega \in [-\pi,\pi)} \|\hat{\Theta}(\omega)^{-1} - S(\omega)\| \;\xrightarrow{p}\; 0.
\]

Because $\mathbf{X}$ is purely non-deterministic, $S(\omega)$ admits a unique canonical (minimum-phase) spectral factorization $S(\omega) = H(\omega) \Sigma H(\omega)^H$ with $H(\omega)$ invertible and continuous in $\omega$. Canonical spectral factorization is continuous with respect to uniform perturbations of the spectral density on compact frequency sets, so the factor $\hat{H}(\omega)$ obtained from $\hat{\Theta}(\omega)^{-1}$ satisfies
\[
\sup_{\omega \in [-\pi,\pi)} \|\hat{H}(\omega)^{-1} - H(\omega)^{-1}\| \;\xrightarrow{p}\; 0.
\]

The non-parametric partial directed coherence $\pi_{k\leftarrow \ell}(\omega)$ is a continuous functional of the inverse spectral factor $\bar{A}(\omega) = H(\omega)^{-1}$, involving only matrix entries, absolute values, and normalization by a denominator bounded away from zero. By the continuous mapping theorem,
\[
\sup_{\omega \in [-\pi,\pi)} \bigl|\hat{\pi}_{k\leftarrow \ell}(\omega) - \pi_{k\leftarrow \ell}(\omega)\bigr| \;\xrightarrow{p}\; 0
\quad \text{for all } k, \ell \in [K].
\]
\end{proof}

\section{Sufficient conditions for completion consistency (Proposition~\ref{prop:nnm-sufficient})}

\begin{proposition}[Sufficient conditions for completion consistency via nuclear-norm minimization]
\label{prop:nnm-sufficient}
Suppose $S(\omega)$ admits the decomposition~\eqref{eq:lowrank-plus-noise} with effective rank $r$. Suppose further that:
\begin{enumerate}[leftmargin=2em]
  \item[(i)] (Spikiness) $\|S(\omega)\|_\infty \;\lesssim\; K^{-1/2}\, \|S(\omega)\|_F$ uniformly in $\omega$;
  \item[(ii)] (Coverage) the set $\mathcal{O}$ of observed pairs satisfies
  \[
  |\mathcal{O}| \;\gtrsim\; r\, K\, \log K \cdot \log T,
  \]
  where $T = \min_u T_u$;
  \item[(iii)] (Local consistency) Assumption~\ref{ass:local} holds.
\end{enumerate}
Then, for a penalty parameter $\mu>0$, the frequency-wise nuclear-norm estimator
\[
\hat{S}^{c}(\omega) \;=\; \arg\min_{M \succeq 0}\; \|M\|_* \;+\; \frac{\mu}{2}\,\bigl\|\mathcal{P}_{\mathcal{O}}\bigl(M - \hat{S}(\omega)\bigr)\bigr\|_F^2
\]
satisfies~\eqref{eq:completion-consistency}, so that the conclusion of Theorem~\ref{thm:incomplete} holds.
\end{proposition}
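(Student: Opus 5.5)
The argument reduces the statement to a noisy low-rank matrix-completion bound applied frequency by frequency. It then uses local consistency to drive the noise level to zero uniformly in $\omega$. Finally, it argues that the approximation error from the near-low-rank structure is negligible.

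\textbf{Step 1: decompose the observed data.} Fix $\omega$ and write the target as $S(\omega)=L(\omega)+E(\omega)$ with $L=A S_Z A^H$ of rank at most $r$ and $E=S_\varepsilon$ the low-power sensor term. Then write
\[
\hat S_{ij}(\omega)=S_{ij}(\omega)+Z_{ij}(\omega),\qquad (i,j)\in\mathcal{O}.
\]
By Assumption~\ref{ass:local} and the finite averaging in Definition~\ref{def:stitched_incomplete}, exactly as in the first step of the proof of Theorem~\ref{thm:complete_short}, we have $\delta_T:=\sup_\omega\max_{(i,j)\in\mathcal{O}}|Z_{ij}(\omega)|\xrightarrow{p}0$. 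In particular $\sup_\omega\|\mathcal{P}_{\mathcal{O}}(Z)\|_F\le\sqrt{|\mathcal{O}|}\,\delta_T$.

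\textbf{Step 2: invoke a completion oracle inequality.} Apply a noisy nuclear-norm completion inequality of Negahban--Wainwright or Koltchinskii--Lounici--Tsybakov type. Condition (i) is the spikiness hypothesis, and condition (ii) is the sample-size requirement, with the random-graph model for $\mathcal{O}$ and the $\log T$ factor supplying the union bound over a frequency grid of size $\asymp T$. Choosing $\mu$ so that the penalty dominates $\|\mathcal{P}_{\mathcal{O}}(Z)\|_{\mathrm{op}}$, the bound takes the form
\[
\frac{\|\hat S^c(\omega)-S(\omega)\|_F^2}{K^2}\;\lesssim\;\frac{rK\log K}{|\mathcal{O}|}\,\delta_T^2\;+\;\frac{\|S(\omega)-L(\omega)\|_*}{K}\,\lambda_\mu ,
\]
and it holds with probability tending to one simultaneously over the grid. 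Two modifications to the standard result need checking: the PSD constraint $M\succeq0$ only shrinks the feasible set, and the Hermitian complex case is handled through the real augmentation~\eqref{eq:augmented}. Because $K$ is fixed, a Frobenius bound of order $o_p(1)$ gives operator-norm consistency.

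\textbf{Step 3: extend from the grid to all $\omega$.} All the maps involved are Lipschitz in the data, and $S$ is continuous in $\omega$. So the bound on the grid extends to a bound on the supremum over $[-\pi,\pi)$, which yields~\eqref{eq:completion-consistency}. Theorem~\ref{thm:incomplete} then gives the final conclusion.

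\textbf{Main obstacle.} The difficulty is the approximation term $\|S-L\|_*$. Under~\eqref{eq:lowrank-plus-noise}, $S$ is full rank and not exactly low rank, so this term does not vanish as $T\to\infty$, and completion is consistent only up to a bias of order $\|S_\varepsilon\|_*$. To close the argument I would first treat ``effective rank $r$'' as an assumption that the tail of the spectrum vanishes asymptotically, in the sense $\sup_\omega\|S_\varepsilon(\omega)\|_*\to0$ along the sequence considered. Failing that, I would weaken the conclusion to a bound of order $\|S_\varepsilon\|_*$. The bound then has to be propagated through the continuity arguments of Theorem~\ref{thm:incomplete} as approximate rather than exact consistency. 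I expect that making this step honest will require an explicit extra hypothesis.
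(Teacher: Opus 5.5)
Your route is the paper's own. You apply Negahban--Wainwright frequency by frequency, let Assumption~\ref{ass:local} remove the noise term, take the $\log T$ from a union bound, and hand the result to Theorem~\ref{thm:incomplete}. The paper's sketch simply asserts that ``the approximation term vanishes under the near-low-rank decomposition.'' You are right to distrust that step, but the proposal leaves it open, and the extra hypothesis you suggest contradicts the others. Theorem~\ref{thm:incomplete} needs $S(\omega)\succeq c^{-1}I$. Weyl's inequality applied to $S=AS_ZA^H+S_\varepsilon$ gives $c^{-1}\le\lambda_{r+1}(S)\le\|S_\varepsilon\|$ whenever $K>r$. Hence $\sum_{k>r}\sigma_k(S)\ge (K-r)/c$, and $\sup_\omega\|S_\varepsilon(\omega)\|_*\to0$ would force $\lambda_{\min}(S)\to0$. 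For a fixed process there is, moreover, no sequence along which such a condition could hold.

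Your displayed inequality also misplaces the obstacle. There the tail term is multiplied by $\lambda_\mu$, which may be taken of order $\|\mathcal{P}_{\mathcal{O}}(Z)\|_{\mathrm{op}}\to0$. What actually fails is the term you dropped. The Negahban--Wainwright estimator carries an explicit $\ell_\infty$ constraint and assumes i.i.d.\ sampling of entries. Their error bound is of order $\max\{\nu^2,\alpha^2\}\,rK\log K/|\mathcal{O}|$ (relative squared Frobenius), with spikiness $\alpha=K\|S\|_\infty/\|S\|_F$. The $\alpha^2$ part does not shrink as $\delta_T\to0$ at fixed $K$ and $\mathcal{O}$. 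Worse, their rate needs $\alpha=O(1)$, whereas condition~(i) only says $\alpha\lesssim\sqrt K$. Condition~(i) is in fact implied by admissibility, since $\|S\|_\infty\le c$ and $\|S\|_F\ge\sqrt K/c$. Admissibility also forces $\alpha\ge\sqrt K/c^2$, so the bound exceeds a constant times $r\log K$ and is vacuous; reading (i) as ``the spikiness hypothesis'' is a mismatch.

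Your step ``the PSD constraint only shrinks the feasible set'' fails more fundamentally. On the PSD cone, $\|M\|_*=\mathrm{tr}\,M$. Every diagonal pair $(i,i)$ lies in $\mathcal{O}$ because $\bigcup_u\Omega_u=[K]$. So the objective equals $\sum_i M_{ii}+\frac{\mu}{2}\sum_{(i,j)\in\mathcal{O}}|M_{ij}-\hat S_{ij}(\omega)|^2$ and never involves the unobserved entries. For large $\mu$ and $T$, the optimal observed part is $\hat S_{ii}-1/\mu$ on the diagonal and $\hat S_{ij}$ off it. (So $\mu$ must diverge even to get the observed diagonal right.) Every positive semidefinite completion of that observed part is a minimizer. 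Because $S\succeq c^{-1}I$, these completions include, up to the vanishing observed-entry error, the matrices obtained from $S$ by adding $t$ to an unobserved entry $S_{ij}$ (and $\bar t$ to $S_{ji}$), for any $|t|$ up to nearly $c^{-1}$. So the argmin is not a singleton and contains points at distance of order $c^{-1}$ from $S$ at every frequency. Consequently no oracle inequality can deliver~\eqref{eq:completion-consistency} for the program as written. Your Step~3 (Lipschitz dependence on the data) also has no single-valued map to act on. What is missing is therefore not a sharper argument but a different estimator, or a weaker approximate-recovery conclusion with an explicit bias.
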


The result follows from the matrix completion bounds of Negahban and Wainwright~\cite{negahban2012restricted}, applied frequency-wise. Their main theorem establishes that, under the spikiness condition $\|S(\omega)\|_\infty \lesssim K^{-1/2} \|S(\omega)\|_F$ and the coverage condition $|\mathcal{O}| \gtrsim r K \log K$, the nuclear-norm penalized least-squares estimator recovers a near-low-rank matrix from noisy partial observations, with mean-squared Frobenius error decomposing into a stochastic term that vanishes as the noise variance shrinks and an approximation term that vanishes as the effective rank captures the matrix energy. In our setting, the noise on each observed entry is the local PSD estimation error $\hat{S}_{ij}(\omega) - S_{ij}(\omega)$, whose variance vanishes as $T_u \to \infty$ under Assumption~\ref{ass:local}. The approximation term vanishes under the near-low-rank decomposition~\eqref{eq:lowrank-plus-noise}. To obtain uniform consistency across the frequency grid of size $\asymp T$, a union bound contributes the additional $\log T$ factor that appears in coverage condition (ii). Translating $|\mathcal{O}|$ to a per-pair observation probability $p_{\mathrm{obs}}$ gives~\eqref{eq:coverage-prob}. The conclusion of Theorem~\ref{thm:incomplete} then follows directly from the verified hypothesis~\eqref{eq:completion-consistency}.

\section{Izhikevich spiking neural network: model parameters and simulation details}
\label{S1_Text}

\subsection{Network architecture}

The biophysically realistic validation network consists of $N = 3$ neurons: two
excitatory regular-spiking units $N_1(\text{E})$ and $N_2(\text{E})$, and one
inhibitory fast-spiking unit $N_3(\text{I})$.
All-to-all directed synaptic connectivity is encoded in the weight matrix
(columns: pre-synaptic; rows: post-synaptic)
\begin{equation}
W =
\begin{pmatrix}
 0   & 10  & -18 \\
 14  &  0  & -16 \\
  9  & 12  &   0
\end{pmatrix},
\end{equation}
where positive entries denote excitatory synapses and negative entries denote
inhibitory synapses.
Diagonal entries are zero (no self-connections).

\subsection{Single-neuron dynamics}

Each neuron $k \in \{1, 2, 3\}$ follows the Izhikevich two-variable model~\cite{izhikevich2003simple}:
\begin{align}
\frac{dv_k}{dt} &= 0.04\,v_k^2 + 5\,v_k + 140 - u_k + I_k(t), \\
\frac{du_k}{dt} &= a_k\bigl(b_k\,v_k - u_k\bigr),
\end{align}
with the reset rule: if $v_k \geq 30\,\text{mV}$, then $v_k \leftarrow c_k$
and $u_k \leftarrow u_k + d_k$.
Here $v_k$ is the membrane potential (mV) and $u_k$ is a recovery variable.
The neuron-specific parameters are given in Table~\ref{tab:izh_params}.

\begin{table}[h!]
\centering
\caption{Izhikevich model parameters for each neuron.}
\label{tab:izh_params}
\begin{tabular}{lcccc}
\hline
\textbf{Neuron} & $a$ & $b$ & $c$ (mV) & $d$ \\
\hline
$N_1(\text{E})$ & 0.02 & 0.200 & $-64.9$ & 1.0 \\
$N_2(\text{E})$ & 0.02 & 0.200 & $-63.7$ & 1.0 \\
$N_3(\text{I})$ & 0.06 & 0.225 & $-65.0$ & 1.0 \\
\hline
\end{tabular}
\end{table}

The parameters for $N_1(\text{E})$ and $N_2(\text{E})$ correspond to the
regular-spiking excitatory cell type; the elevated time-scale parameter $a$
and recovery sensitivity $b$ for $N_3(\text{I})$ correspond to a
fast-spiking inhibitory interneuron~\cite{izhikevich2003simple}.

\subsection{External input}

Each neuron receives independent stochastic input
\begin{equation}
I_k(t) = \mu + \sigma\,\xi_k(t),
\end{equation}
where $\mu = 4.0$ (mean drive), $\sigma = 18.0$ (noise amplitude), and
$\xi_k(t)$ is independent standard Gaussian white noise sampled at each
millisecond time step.
The mean drive $\mu$ is chosen to maintain spontaneous firing in all three
neurons; the noise amplitude $\sigma$ ensures irregular, non-periodic spiking
consistent with in-vivo-like activity.

\subsection{Simulation}

The network was simulated for $T_{\mathrm{sim}} = 200{,}000\,\text{ms}$
(200 s) using the standard Euler integration scheme with a 1 ms time step.
Spike times were recorded whenever $v_k(t) \geq 30\,\text{mV}$.
Synaptic input from neuron $j$ to neuron $k$ was delivered as an
instantaneous additive current $W_{kj}$ at the time of each pre-synaptic spike,
with no synaptic delay.

\section{IBL Neuropixels sessions and brain areas retained for SPIDER analysis}
\label{S2_Text}

\subsection{Overview}

A total of 43 recording sessions from the International Brain Laboratory (IBL)
brain-wide map dataset were retained after quality control and session-selection
procedures described in the Methods.
Sessions span 12 contributing laboratories and 39 individual adult mice (four mice each contributed two sessions).
Each session is identified by the format
\texttt{lab\_\_subject\_\_date\_\_session-number}.
Together the 43 sessions cover 50 distinct brain areas after Allen CCF
hierarchical remapping.

\subsection{Retained sessions}

\begin{table}[h]
\centering
\small
\caption{IBL sessions retained for SPIDER analysis (43 sessions,
12 laboratories).}
\label{tab:sessions}
\begin{tabular}{lllll}
\hline
\textbf{\#} & \textbf{Laboratory} & \textbf{Subject} & \textbf{Date} &
\textbf{Areas recorded} \\
\hline
1  & angelakilab       & NYU-11       & 2020-02-21 & 9  \\
2  & angelakilab       & NYU-37       & 2021-01-25 & 6  \\
3  & angelakilab       & NYU-45       & 2021-07-19 & 6  \\
4  & churchlandlab     & CSHL049      & 2020-01-09 & 4  \\
5  & churchlandlab     & CSHL051      & 2020-02-05 & 5  \\
6  & churchlandlab     & CSHL055      & 2020-02-17 & 6  \\
7  & churchlandlab     & CSHL058      & 2020-07-07 & 4  \\
8  & churchlandlab     & CSHL058      & 2020-07-08 & 5  \\
9  & churchlandlab\_ucla & MFD\_05    & 2023-08-16 & 4  \\
10 & churchlandlab\_ucla & MFD\_07    & 2023-08-31 & 4  \\
11 & churchlandlab\_ucla & UCLA033    & 2022-02-15 & 9  \\
12 & churchlandlab\_ucla & UCLA036    & 2022-03-09 & 7  \\
13 & churchlandlab\_ucla & UCLA037    & 2022-02-02 & 7  \\
14 & cortexlab         & KS044        & 2020-12-09 & 9  \\
15 & cortexlab         & KS046        & 2020-12-04 & 7  \\
16 & cortexlab         & KS051        & 2021-05-11 & 5  \\
17 & cortexlab         & KS096        & 2022-06-17 & 4  \\
18 & danlab            & DY\_008      & 2020-03-03 & 5  \\
19 & danlab            & DY\_009      & 2020-02-27 & 6  \\
20 & danlab            & DY\_014      & 2020-07-19 & 6  \\
21 & danlab            & DY\_016      & 2020-09-11 & 5  \\
22 & danlab            & DY\_016      & 2020-09-12 & 6  \\
23 & danlab            & DY\_020      & 2020-09-29 & 9  \\
24 & hausserlab        & PL017        & 2021-11-14 & 4  \\
25 & hoferlab          & SWC\_061     & 2020-11-26 & 5  \\
26 & mainenlab         & ZFM-01935    & 2021-02-03 & 7  \\
27 & mainenlab         & ZFM-01936    & 2021-01-23 & 10 \\
28 & mainenlab         & ZFM-01937    & 2021-04-08 & 4  \\
29 & mainenlab         & ZFM-02370    & 2021-04-28 & 4  \\
30 & mainenlab         & ZFM-02373    & 2021-06-23 & 5  \\
31 & mainenlab         & ZM\_2240     & 2020-01-22 & 7  \\
32 & mainenlab         & ZM\_2240     & 2020-01-23 & 4  \\
33 & mrsicflogellab    & SWC\_038     & 2020-07-30 & 6  \\
34 & mrsicflogellab    & SWC\_054     & 2020-10-05 & 5  \\
35 & steinmetzlab      & NR\_0027     & 2022-08-19 & 4  \\
36 & steinmetzlab      & NR\_0028     & 2023-03-07 & 4  \\
37 & steinmetzlab      & NR\_0028     & 2023-03-15 & 8  \\
38 & wittenlab         & ibl\_witten\_27 & 2021-01-21 & 7 \\
39 & zadorlab          & CSH\_ZAD\_011 & 2020-03-23 & 5 \\
40 & zadorlab          & CSH\_ZAD\_022 & 2020-05-24 & 6 \\
41 & zadorlab          & CSH\_ZAD\_025 & 2020-08-12 & 9 \\
42 & zadorlab          & CSH\_ZAD\_026 & 2020-08-14 & 7 \\
43 & zadorlab          & CSH\_ZAD\_029 & 2020-09-19 & 6 \\
\hline
\end{tabular}
\end{table}

\subsection{Brain areas per session}

Table~\ref{tab:areas} lists the brain areas recorded in each session after
Allen CCF hierarchical remapping and quality control filtering.
Area acronyms follow the Allen Mouse Brain Common Coordinate Framework (CCF)
v3 nomenclature.

\begin{table}[h]
\centering
\small
\caption{Brain areas retained per session after quality control.}
\label{tab:areas}
\begin{tabular}{lp{11cm}}
\hline
\textbf{Session \#} & \textbf{Areas} \\
\hline
1  & BMA, CA1, CA3, CEA, COA, GPe, LGd, TH, VPM \\
2  & CP, DG, Eth, LP, SGN, SSp \\
3  & APN, CA1, DG, MB, PO, VISa \\
4  & DG, Eth, LP, VISa \\
5  & APN, CA1, DG, VISa, VISam \\
6  & CA1, CP, SSp, TH, VPL, VPM \\
7  & CA1, PO, VISa, VISam \\
8  & ILA, MOs, OLF, PL, STR \\
9  & ProS, RN, SCm, VISp \\
10 & CA1, DG, PO, VISa \\
11 & CA1, MB, MRN, ProS, RN, SCm, SUB, VISam, VISp \\
12 & APN, CA1, DG, LP, POL, VISa, VISam \\
13 & APN, CA1, DG, Eth, MRN, TH, VISa \\
14 & CA1, COA, MEA, MG, PoT, SGN, VISa, VPL, VPM \\
15 & ACA, AON, ILA, MOs, OLF, ORB, PL \\
16 & CA1, DG, LP, PoT, VISa \\
17 & CA1, DG, LP, VISa \\
18 & CA1, CA3, LP, SSp, VPM \\
19 & CA1, DG, LP, PoT, SGN, VISp \\
20 & ACB, CA3, CP, LP, SI, SSp \\
21 & CA1, CA3, LGd, TH, VISa \\
22 & CA1, DG, Eth, LP, PO, VISam \\
23 & CA1, CA3, LGd, MRN, RN, RSPv, SCm, VISa, VPL \\
24 & MB, MRN, RSPv, SCm \\
25 & AON, CP, MOs, OLF, ORB \\
26 & CA1, CA3, Eth, LD, LGd, SSp, VPM \\
27 & ACA, ACB, ILA, LS, MD, MOs, PL, PVT, SI, STR \\
28 & ACA, LS, MOp, MOs \\
29 & ACA, LS, MOs, STR \\
30 & CA1, CA3, DG, VISa, VPM \\
31 & ACA, ILA, LS, MOs, OLF, PL, STR \\
32 & CA1, DG, RSPagl, SSp \\
33 & AON, ILA, MOs, OLF, ORB, PL \\
34 & CA1, DG, Eth, PO, VISa \\
35 & APN, CA1, DG, MRN \\
36 & MRN, ProS, SCm, SUB \\
37 & CA1, CA3, DG, PAG, RSPagl, RSPv, SCm, VISp \\
38 & CA1, CA3, DG, MG, POL, TH, VISam \\
39 & CA3, DG, MRN, ProS, SCm \\
40 & APN, CA1, CA3, LGd, VISa, ZI \\
41 & CA1, CA3, DG, LP, PO, RSPagl, VPL, VPM, ZI \\
42 & CP, MRN, OLF, SSp, STR, SUB, VISp \\
43 & BMA, CEA, CP, OLF, SSp, STR \\
\hline
\end{tabular}
\end{table}

\subsection{Complete list of retained brain areas (50 areas)}

The following 50 brain areas were retained across all sessions and used for
the final SPIDER estimation.
Areas are listed alphabetically with their full Allen CCF names.

\begin{table}[h]
\centering
\small
\caption{Brain areas retained in the final analysis.}
\label{tab:allAreas}
\begin{tabular}{lll}
\hline
\textbf{Acronym} & \textbf{Full name} & \textbf{Major division} \\
\hline
ACA    & Anterior cingulate area                    & Isocortex \\
ACB    & Nucleus accumbens                          & Striatum \\
AON    & Anterior olfactory nucleus                 & Olfactory \\
APN    & Anterior pretectal nucleus                 & Midbrain \\
BMA    & Basomedial amygdala                        & Amygdala \\
CA1    & Field CA1                                  & Hippocampus \\
CA3    & Field CA3                                  & Hippocampus \\
CEA    & Central amygdalar nucleus                  & Amygdala \\
COA    & Cortical amygdalar area                    & Amygdala \\
CP     & Caudoputamen                               & Striatum \\
DG     & Dentate gyrus                              & Hippocampus \\
Eth    & Ethmoid nucleus                            & Thalamus \\
GPe    & Globus pallidus, external                  & Pallidum \\
ILA    & Infralimbic area                           & Isocortex \\
LD     & Lateral dorsal nucleus of thalamus         & Thalamus \\
LGd    & Dorsal part of lateral geniculate complex  & Thalamus \\
LP     & Lateral posterior nucleus of thalamus      & Thalamus \\
LS     & Lateral septal nucleus                     & Cerebral nuclei \\
MB     & Midbrain                                   & Midbrain \\
MD     & Mediodorsal nucleus of thalamus            & Thalamus \\
MEA    & Medial amygdalar nucleus                   & Amygdala \\
MG     & Medial geniculate complex                  & Thalamus \\
MOp    & Primary motor area                         & Isocortex \\
MOs    & Secondary motor area                       & Isocortex \\
MRN    & Midbrain reticular nucleus                 & Midbrain \\
OLF    & Olfactory areas                            & Olfactory \\
ORB    & Orbital area                               & Isocortex \\
PAG    & Periaqueductal gray                        & Midbrain \\
PL     & Prelimbic area                             & Isocortex \\
PO     & Posterior complex of thalamus              & Thalamus \\
POL    & Posterior limiting nucleus of thalamus     & Thalamus \\
PoT    & Posterior triangular nucleus of thalamus   & Thalamus \\
ProS   & Prosubiculum                               & Hippocampus \\
PVT    & Paraventricular nucleus of thalamus        & Thalamus \\
RN     & Red nucleus                                & Midbrain \\
RSPagl & Retrosplenial area, lateral agranular      & Isocortex \\
RSPv   & Retrosplenial area, ventral                & Isocortex \\
SCm    & Superior colliculus, motor related         & Midbrain \\
SGN    & Suprageniculate nucleus                    & Thalamus \\
SI     & Substantia innominata                      & Pallidum \\
SSp    & Primary somatosensory area                 & Isocortex \\
STR    & Striatum                                   & Striatum \\
SUB    & Subiculum                                  & Hippocampus \\
TH     & Thalamus                                   & Thalamus \\
VISa   & Visual area anterior                       & Isocortex \\
VISam  & Anteromedial visual area                   & Isocortex \\
VISp   & Primary visual area                        & Isocortex \\
VPL    & Ventral posterolateral nucleus             & Thalamus \\
VPM    & Ventral posteromedial nucleus              & Thalamus \\
ZI     & Zona incerta                               & Hypothalamus \\
\hline
\end{tabular}
\end{table}

\end{document}